\pgfplotsset{compat=1.18}
\newtheorem{definition}{Definition}
\newtheorem{theorem}{Theorem}
\newtheorem{lemma}{Lemma}
\newtheorem{remark}{Remark}
\newcommand{\myrule}{\vspace{-0.6\baselineskip}\hrulefill\vspace{-0.1\baselineskip}}
\begin{document}
\title{\huge Fairness Designs for Load Balancing Optimization in Satellite-Cell-Free Massive MIMO Systems}
\author{Trinh Van Chien, Ngo Tran Anh Thu, Nguyen Hoang Lam, Hien Quoc Ngo, ~\IEEEmembership{Fellow,~IEEE}, Symeon Chatzinotas, ~\IEEEmembership{Fellow,~IEEE}, and Huynh Thi Thanh Binh,~\IEEEmembership{Member,~IEEE}\vspace{-5mm}
\thanks{Trinh Van Chien, Ngo Tran Anh Thu, Nguyen Hoang Lam, and Huynh Thi
Thanh Binh are with the School of Information and Communication Technology (SoICT), Hanoi University of Science and Technology (HUST), Vietnam (email: chientv@soict.hust.edu.vn, ngotrananhthu1632001@gmail.com, hoanglam02qh@gmail.com, binhht@soict.hust.edu.vn). Hien Quoc Ngo is with the School of Electronics, Electrical Engineering and Computer Science,
Queen's University Belfast, Belfast BT7 INN, United Kingdom (email: hien.ngo@qub.ac.uk). Symeon Chatzinotas is with the Interdisciplinary Centre for Security, Reliability and Trust (SnT), University of Luxembourg, L-1855 Luxembourg, Luxembourg (email: symeon.chatzinotas@uni.lu). (Corresponding author: Trinh Van Chien). 
}}

\markboth{Journal of \LaTeX\ Class Files,~Vol.~18, No.~9, September~2020}%
{How to Use the IEEEtran \LaTeX \ Templates}

\maketitle
\begin{abstract}
Space-ground communication systems are important in providing ubiquitous services in a large area. This paper considers the fairness designs under a load-balancing framework with heterogeneous receivers comprising access points (APs) and a satellite. We derive an ergodic throughput of each user in the uplink data transmission for an arbitrary association pattern and imperfect channel state information, followed by a closed-form expression with the maximum-ratio combining and rich scattering environments. We further formulate a generic fairness optimization problem, subject to the optimal association patterns for all the users. Despite the combinatorial structure, the global optimal solution to the association patterns can be obtained by an exhaustive search for small-scale networks with several APs and users. We design a low computational complexity algorithm for large-scale networks based on evolutionary computation that obtains good patterns in polynomial time. Specifically, the genetic algorithm (GA) is adapted to the discrete feasible region and the concrete fairness metrics. We extensively observe the fairness design problem by incorporating transmit power control and propose a hybrid genetic algorithm to address the problem.
Numerical results demonstrate that the association pattern to each user has a significant impact on the network throughput. Moreover, the proposed GA-based algorithm offers the same performance as an exhaustive search for small-scale networks, while it unveils interesting practical association patterns as the network dimensions go large. The load-balancing approach, combined with  power control factors, significantly enhances system performance compared to conventional schemes and configurations with fixed factors.
\end{abstract}

\begin{IEEEkeywords}
Satellite-Cell-Free Massive MIMO, load balancing, fairness design, genetic algorithm.
\end{IEEEkeywords}

\vspace{-5mm}
\section{Introduction}
The transition to the sixth generation (6G) of terrestrial-space communications marks the forthcoming frontier in wireless network technologies, set to succeed the ubiquitous 5G. These new 6G networks are still in the initial stages of research and development, but they are expected to bring new ways of connecting people with faster data speeds, reduced latency, and increased capacity compared to their predecessors\cite{smida2023full,matthaiou2021road,mahmoud20216g,liu20246g}. In particular, 6G may enable data communication at unprecedented terabit per second speeds, achieve sub-millisecond latency, and offer the capacity to facilitate an extraordinary amount of connected devices within a compact square kilometer area \cite{zhang20236g}. 
A technological breakthrough supporting 6G is the implementation of advanced Massive MIMO systems, designed with numerous antennas utilizing linear beamforming techniques to manage connections within highly concentrated networks of devices\cite{zhang2023performance}. 
Moreover, the advent of cell-free Massive MIMO technology is expected to further advance spectral and energy efficiency by overcoming the conventional limitations imposed by cellular networks. The design of these networks is fundamentally load balancing, characterized by distributed transmitters cooperatively operating to provide dedicated service to individual users, thereby intensifying the focus on creating a customized network experience \cite{van2022space}. This emphasis on load balancing Cell-free Massive MIMO network propels the user experience to new heights, making possible a range of disruptive technological applications, especially in the domain of immersive extended reality; however, Cell-free Massive MIMO is most effective within restricted coverage areas such as urban \cite{ammar2021user}.

Because of its potential to offer widespread connectivity to numerous users over extensive areas, the field of satellite communications has experienced renewed interest \cite{schwarz2019mimo}, \cite{van2022user}. While geostationary (GEO) satellites can cover large regions, they come with high costs, shared bandwidth, and significant latency \cite{van2022user}. These drawbacks have prompted the development of non-geostationary (NGSO) satellites, such as those in low-Earth orbit (LEO), which are poised to revolutionize radio systems with the expectation of integration into future 6G networks \cite{zhao2024transfg}. NGSO satellites, orbiting at lower altitudes than GEO ones, provide benefits like reduced latency and tailored coverage for particular uses or isolated locations\cite{jia2021uplink}. Integrating satellites with cell-free Massive MIMO is recognized as a network architecture for advancing wireless communication systems \cite{van2022space}. Nonetheless, to the best of our knowledge, there is a lack of research exploring a load balancing approach for hybrid space-ground systems, particularly under the framework of coherent signal processing and resource management. 

Evolutionary algorithms (EAs) continue to draw interest because of their demonstrated capabilities in solving complex optimization problems that are prevalent in real-world scenarios. As future networks should be multi-layered systems with various integrated technologies, evolutionary strategies like the genetic algorithm (GA) have shown promise in radio resource management, attributed to their effectiveness in effective system designs \cite{van2024performance,chien2024active,binh2024efficient}. The GA operates on evolutionary biology principles, utilizing selection, mutation, and crossover processes to evolve populations toward high-quality solutions \cite{mitchell1998introduction}. The strength and adaptability of GAs make them particularly suitable for navigating complex configurations, allowing them to search through large solution spaces to identify superior solutions. Integrating diverse cutting-edge technologies in 6G networks introduces significantly complex optimization problems, especially in extensive network environments\cite{coello2007evolutionary}. In addition, the ability of GAs to scale makes them ideal for deployment in networks that must support an enormous number of devices with a wide array of service needs\cite{strinati20196g}. Notably, the user association problem is a form of combinatorial optimization, which complicates the direct application of the standard GA.

In this paper, a load balancing network formed by cell-free Massive MIMO and satellite is studied to improve spectral efficiency (SE) with different fairness criteria. Our main contributions are summarized as follows
\begin{itemize}
\item We derive the uplink ergodic SE of each user under arbitrary association patterns. The closed-form ergodic SE expression is further obtained as the maximum ratio combining (MRC) locally applied at the APs and satellite. 
\item We formulate a generic fairness problem to seek the optimal association between each user and the ground and space receivers. For small-scale networks, an exhaustive search through the combinatorial structure can obtain the globally optimal association. 
\item We propose an efficient algorithm for large-scale networks by exploiting the GA adapted to optimize the association sets under a load balancing framework. The main concept is that binary encoding can be easily adapted to represent the user association.  \textcolor{black}{To further enhance system performance, we integrate the power control into the load balancing optimization problem. We then develop a GA framework incorporating hybrid variables tailored to the optimization structure. } 
\item Numerical results demonstrate the critical roles of the satellite and APs in enhancing the ergodic SE. The GA-based association designs offer the same performance as an exhaustive search for small-scale networks and further efficiently optimize for large-scale networks. \textcolor{black}{The transmit power control leads to improved system performance by approximately 20\% compared to alternative schemes.}
\end{itemize}
The rest of this paper is organized as follows: Section~\ref{Sec:Sys} presents the integrated satellite-cell-free massive MIMO system along with the channel estimation process. In Section~\ref{Sec:Uplinkdata}, the uplink data transmission is described, and then the ergodic SE for every user is obtained in closed form. We formulate the generic fairness optimization problem under the load balancing topology and exploit an exhaustive search to obtain the global optimum in Section~\ref{Sec:Fairness}; besides, for networks with numerous users and APs, we propose GA-based association designs with high-quality solutions. \textcolor{black}{ Section~\ref{sec:Fairness Prob with PC} extends the fairness optimization framework by incorporating power control mechanisms and proposes a hybrid genetic algorithm to efficiently solve the problem.}
Section~\ref{Sec:NumRe} provides numerical results to validate the analytical SE and efficiency of our proposals. Finally, Section~\ref{Sec:Con} draws main conclusions.

\textit{Notation}: Matrices and vectors are denoted by the capital and lower bold letters, respectively. The matrix transpose and Hermitian  are denoted by $(\cdot)^T$ and $(\cdot)^H$, respectively. The Euclidean norm is denoted by $\| \cdot \|$. The expectation of a random variable is denoted by $\mathbb{E}\{\cdot\}$, and $\mathcal{CN}(\cdot, \cdot)$, $\mathcal{U}(\cdot, \cdot)$, and $\mathcal{B}(\cdot, \cdot)$ represent the circularly symmetric Gaussian, uniform, and Bernoulli distributions, respectively. Notations
$\land, \lor, \lnot, \oplus$ are logical operators, specifically conjunction (AND), disjunction (OR), negation (NOT), and exclusive OR (XOR).
\section{Load Balancing Integrated System, Pilot Training, and Channel Estimation} \label{Sec:Sys}
This section delves into space-ground communications, taking the practical challenges posed by imperfect channel state information and limited associations.
\subsection{System and Channel Models}
We consider an integrated space-ground network where an LEO satellite with $M$ antennas cooperates with $N$ single-antenna APs to serve $K$ single-antenna users. For load balancing purposes which enable the high spectral and energy efficiency, we assume that users can flexibly associate with the satellite and APs. To obtain a good spectral efficiency for each terrestrial user, it should hold that $M+N >> K$.
As illustrated in Fig.\ref{fig:NetworkModel}, the APs utilize optical fronthaul links, whereas the satellite connects to the ground station via a radio downlink (feeder link). The ground station then relays the uplinks signals from users to the CPU. We assume that both the optical fronthaul links and the feeder link has imperfect channel gains modeled by a complex Gaussian distribution, affecting the performance of both the pilot training and data transmission phases.\footnote{By utilizing the optical links, the fronthaul enables ideal gains and therefore the exact CPU's position is not necessarily stated due to the assumption on the Gaussian distribution. However, extending our framework to incorporate fronthaul-aware pilot allocation and user association strategies represents a valuable direct for future work.}
Although the propagation channels vary over the time and frequency plane, we adopt the quasi-static fading model where the propagation links are static and frequency-flat over each coherence intervals of $\tau_c$ symbols.  We assume the network operates in a fast-fading environment, and $K$ symbols in each coherence interval are dedicated to the pilot training phase. The remaining $\tau_c - K$ symbols are used for the uplink data transmission.  

\begin{figure}[t]
    \centering    \includegraphics[width=0.45\textwidth]{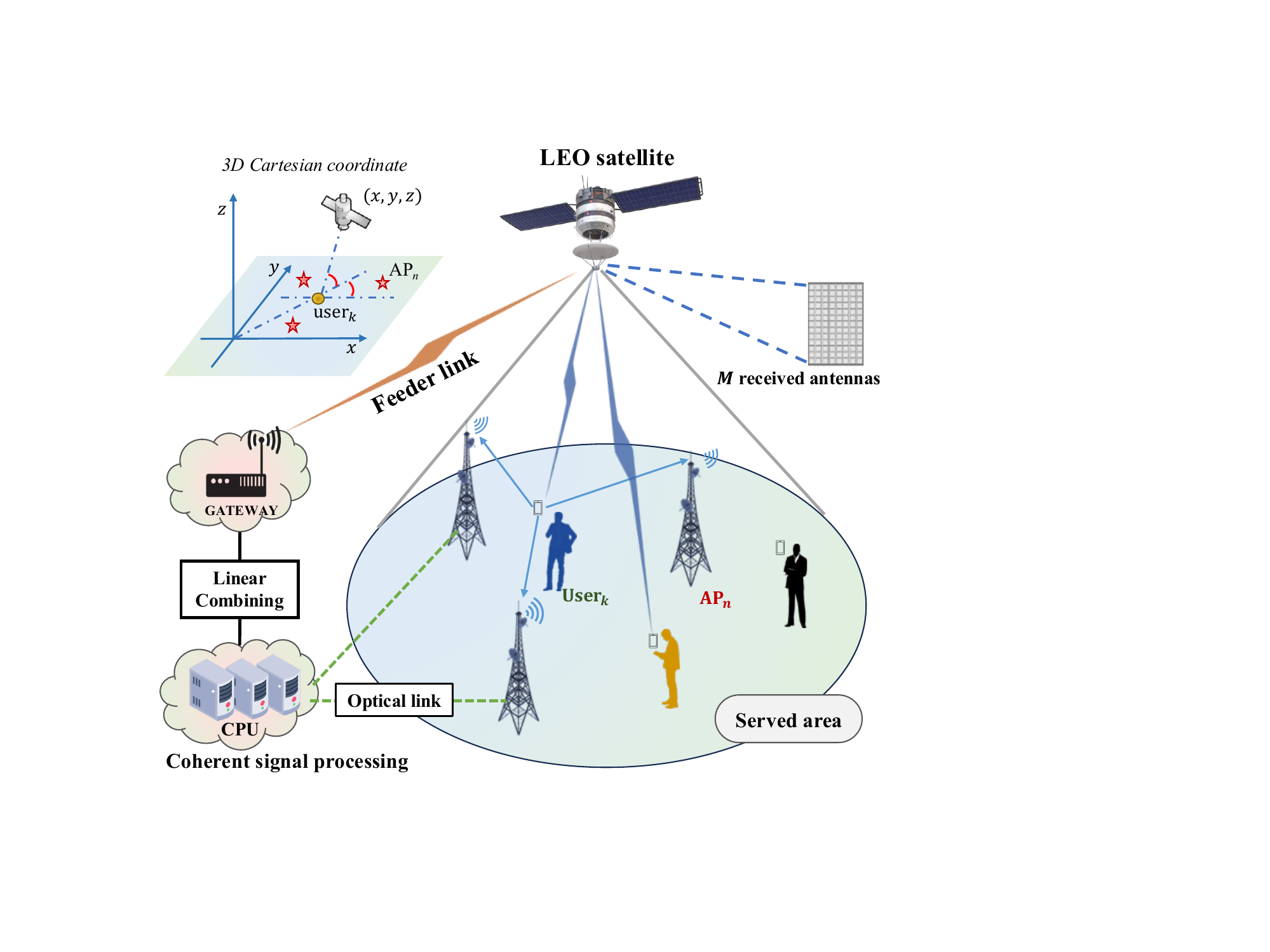}
    \caption{Illustration of a Satellite-Cell-Free MIMO network.}\vspace{-5mm}
    \label{fig:NetworkModel}
\end{figure}

Due to sharing the same time and frequency resource, each AP and satellite can simultaneously receive the transmitted signal from all the  users. Coherent processing allows us to define the terrestrial and/or non-terrestrial mode by optimizing the load balancing. For the load balancing design, let us denote $\tilde{\alpha}_k  \in \{ 0, 1\}$ the binary variable that establishes the association between user~$k$ and the satellite. If $\tilde{\alpha}_k = 1$, user~$k$ is served by the satellite. Otherwise, it is not served by the satellite. If user~$k$ is served by the satellite,  the space channel $\mathbf{h}_k \in \mathbb{C}^M$ is distributed as $\mathbf{h}_k \sim \mathcal{CN}(\bar{\mathbf{h}}_k, \mathbf{R}_k)$. Here, $\bar{\mathbf{h}}_k \in \mathbb{C}^M$ consists of the line of sight (LoS) components, while $\mathbf{R}_k \in \mathbb{C}^{M \times M}$ denotes the spatial correlation. For the user-APs association, we introduce the binary variables $\alpha_{k} \in \{ 0, 1\}$, where  $\alpha_{k} =1$ implies APs provide service to user~$k$, otherwise,  APs do not serve user~$k$. Besides, if user~$k$ is served by AP~$n$, $g_{nk} \in \mathbb{C}$ denotes the ground channel between AP $n$ and user $k$, which is distributed as $g_{nk} \sim \mathcal{CN}(0, \beta_{nk})$.\footnote{Uncorrelated Rayleigh fading is considered for terrestrial links where APs are deployed in rich scattering environments that are  aligned with measurement data. Inclusion of correlated Rayleigh to describe spatial correlation or Rician fading to characterize the LoS components are also of interest for particular propagation environments. These potential extensions are left for future work.}

\subsection{Uplink Pilot Training Phase}
During the pilot training phase in the uplink, all the $K$ users simultaneously transmit their pilot sequences to the receivers. More specifically, user~$k$ transmits its pilot sequence $\pmb{\psi}_k \in \mathcal{C}^{K}$ with $\| \pmb{\psi}_k \|_2 =1$ and $\pmb{\psi}_k^H \pmb{\psi}_{k'} = 0, \forall k \neq k'$. Then, the pilot signal received at AP~$n$, denoted by $\mathbf{y}_{pn} \in \mathbb{C}^K$,  and at the gateway, denoted by $\mathbf{Y}_{p}$, are respectively formulated as  
\begin{align}
&\mathbf{y}_{pn} = \sum\nolimits_{k=1}^K \sqrt{pK} g_{nk} \pmb{\psi}_k^H + \mathbf{n}_{pn}, \label{eq:ypm}\\
&\mathbf{Y}_{p} = \sum\nolimits_{k=1}^K \sqrt{pK}  \mathbf{h}_{k} \pmb{\psi}_k^H + \mathbf{N}_{p}, \label{eq:Yp}
\end{align}
where $p \geq 0$ is the transmit power allocated to each pilot symbol. The additive noise at AP~$n$ is denoted by $\mathbf{n}_{pn} \in \mathbb{C}^K$ and distributed as $\mathbf{n}_{pn} \sim \mathcal{CN}(\mathbf{0}, \sigma_a^2 \mathbf{I}_K)$. Besides, $\mathbf{N}_{p} \in \mathbb{C}^{M \times K}$ denotes the additive noise at the satellite whose elements are i.i.d. $\mathcal{CN}(0, \sigma_s^2)$ random variables. From \eqref{eq:ypm} and \eqref{eq:Yp}, the channel estimates and estimation errors are obtained by the minimum mean square error (MMSE) estimation \cite{Kay1993a}. 
\begin{lemma} \label{lemma:ChannelEst}
\textcolor{black}{If the MMSE estimation is exploited to estimate the channel $g_{nk}$, then the channel estimate $\hat{g}_{nk}$ is
\begin{equation} \label{eq:hatgnk}
\hat{g}_{nk} = \frac{pK \beta_{nk}}{ pK  \beta_{nk} + \sigma_{a}^2}  \mathbf{y}_{pn}^H \pmb{\psi}_k,
\end{equation}
 distributed as $\hat{g}_{nk} \sim \mathcal{CN}(0, \varrho_{nk} )$ and the variance $\varrho_{nk}$ as
\begin{equation} \label{eq:varrhonk}
\varrho_{nk} =  \frac{pK  \beta_{nk}^2}{ pK  \beta_{nk} + \sigma_{a}^2}.
\end{equation}
The estimation error, defined as $e_{mk} = g_{mk} - \hat{g}_{mk}$,  is distributes as $e_{mk} \sim \mathcal{CN}(0, \beta_{mk} - \varrho_{mk})$. Note that the channel estimate $\hat{g}_{mk}$ and the estimation error $e_{mk}$ are independent.}

If  the MMSE estimation is exploited to estimate the channel $\mathbf{h}_{k}$, the channel estimate $\hat{\mathbf{h}}_{k}$ is formulated as
\begin{equation} \label{eq:hatk}
\hat{\mathbf{h}}_k =  \bar{\mathbf{h}}_k + \sqrt{pK}\mathbf{R}_k \pmb{\Psi}_k ( \mathbf{Y}_p \pmb{\psi}_k -  \sqrt{pK} \bar{\mathbf{h}}_k ),
\end{equation}
where $\pmb{\Psi}_k = ( pK \mathbf{R}_k + \sigma_s^2 \mathbf{I}_M )^{-1} $. The channel estimate $\hat{\mathbf{h}}_k$ distributes as $\hat{\mathbf{h}}_k \sim \mathcal{CN}( \bar{\mathbf{h}}_k, p K  \mathbf{R}_k  \pmb{\Psi}_k  \mathbf{R}_k )$. Let us define the estimation error of the space link $\mathbf{e}_k = \mathbf{h}_k - \hat{\mathbf{h}}_k$, then it distributes as $\mathbf{e}_k \sim \mathcal{CN}(\mathbf{0},  \mathbf{R}_k - p K   \mathbf{R}_k  \pmb{\Psi}_k  \mathbf{R}_k )$. 
\end{lemma}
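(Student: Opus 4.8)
The plan is to recognize this statement as a textbook application of the minimum mean square error (MMSE) estimator for jointly Gaussian quantities, where the only problem-specific work is forming a clean sufficient statistic from the pilot observations. The enabling structural fact is the orthonormality of the pilot sequences, $\pmb{\psi}_k^H \pmb{\psi}_{k'} = 0$ for $k\neq k'$ and $\|\pmb{\psi}_k\|_2=1$, which decouples the contributions of the $K$ users when the received signal is projected onto $\pmb{\psi}_k$. I would treat the terrestrial scalar case and the satellite vector case separately, since the latter additionally carries a nonzero (LoS) mean.

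For the terrestrial link I would first project $\mathbf{y}_{pn}$ onto the pilot of user~$k$ to form a scalar sufficient statistic $z_{nk}=\mathbf{y}_{pn}^H\pmb{\psi}_k$. By orthonormality the interfering pilot terms vanish and $z_{nk}$ collapses to a term proportional to $g_{nk}$ plus a residual noise $\mathbf{n}_{pn}^H\pmb{\psi}_k\sim\mathcal{CN}(0,\sigma_a^2)$, the variance being $\sigma_a^2$ precisely because $\|\pmb{\psi}_k\|_2=1$. Since $g_{nk}$ and $z_{nk}$ are jointly Gaussian and zero mean, the MMSE estimator is the linear one $\hat{g}_{nk}=\operatorname{Cov}(g_{nk},z_{nk})\operatorname{Var}(z_{nk})^{-1}z_{nk}$; evaluating the cross-covariance (proportional to $\beta_{nk}$) and $\operatorname{Var}(z_{nk})=pK\beta_{nk}+\sigma_a^2$ supplies the MMSE gain in \eqref{eq:hatgnk}, and substituting into $\operatorname{Var}(\hat g_{nk})=|\text{gain}|^2\operatorname{Var}(z_{nk})$ produces $\varrho_{nk}$ of \eqref{eq:varrhonk}. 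The error statistic $\operatorname{Var}(e_{nk})=\beta_{nk}-\varrho_{nk}$ and the independence $\hat g_{nk}\perp e_{nk}$ then follow from the orthogonality principle: for jointly Gaussian variables the MMSE error is uncorrelated with the estimate, hence independent.

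For the satellite link the argument is the vector analogue with a nonzero mean. Projecting $\mathbf{Y}_p$ onto $\pmb{\psi}_k$ gives $\mathbf{z}_k=\mathbf{Y}_p\pmb{\psi}_k=\sqrt{pK}\,\mathbf{h}_k+\tilde{\mathbf{n}}_k$ with $\tilde{\mathbf{n}}_k=\mathbf{N}_p\pmb{\psi}_k\sim\mathcal{CN}(\mathbf{0},\sigma_s^2\mathbf{I}_M)$. I would then apply the affine MMSE formula $\hat{\mathbf{h}}_k=\bar{\mathbf{h}}_k+\operatorname{Cov}(\mathbf{h}_k,\mathbf{z}_k)\operatorname{Cov}(\mathbf{z}_k)^{-1}(\mathbf{z}_k-\mathbb{E}\{\mathbf{z}_k\})$, computing $\mathbb{E}\{\mathbf{z}_k\}=\sqrt{pK}\,\bar{\mathbf{h}}_k$, the cross-covariance $\operatorname{Cov}(\mathbf{h}_k,\mathbf{z}_k)=\sqrt{pK}\,\mathbf{R}_k$, and the observation covariance $\operatorname{Cov}(\mathbf{z}_k)=pK\mathbf{R}_k+\sigma_s^2\mathbf{I}_M=\pmb{\Psi}_k^{-1}$. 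Substituting these recovers \eqref{eq:hatk} exactly. The covariance of $\hat{\mathbf{h}}_k$ then follows from the simplification $pK\,\mathbf{R}_k\pmb{\Psi}_k\pmb{\Psi}_k^{-1}\pmb{\Psi}_k\mathbf{R}_k=pK\,\mathbf{R}_k\pmb{\Psi}_k\mathbf{R}_k$, exploiting that $\mathbf{R}_k$ and $\pmb{\Psi}_k$ are Hermitian, and the error covariance $\mathbf{R}_k-pK\mathbf{R}_k\pmb{\Psi}_k\mathbf{R}_k$ is again immediate from the orthogonality principle.

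I do not expect a genuine conceptual obstacle, since the result is a direct instance of Gaussian MMSE estimation; the delicate points are purely the bookkeeping ones. Specifically, I would take care to (i) propagate the noise statistics correctly through the pilot projection, so that the post-projection noise variances remain $\sigma_a^2$ and $\sigma_s^2$ rather than scaled versions; (ii) track the power scaling $\sqrt{pK}$ consistently in the cross-covariances and in the mean $\sqrt{pK}\bar{\mathbf{h}}_k$; and (iii) perform the Hermitian cancellation in $\mathbf{R}_k\pmb{\Psi}_k\pmb{\Psi}_k^{-1}\pmb{\Psi}_k\mathbf{R}_k$ without slip. Finally, because the statement claims \emph{independence} of estimate and error rather than mere uncorrelatedness, I would invoke joint Gaussianity explicitly, as that is what upgrades zero correlation to independence.
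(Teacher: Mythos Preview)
Your proposal is correct and follows essentially the same route as the paper: project the received pilot signal onto $\pmb{\psi}_k$ to obtain a sufficient statistic, then invoke the Gaussian MMSE (equivalently LMMSE) estimator, treating the scalar terrestrial link and the vector satellite link with nonzero mean in parallel. The paper's proof is only a two-sentence sketch of exactly this strategy, so your write-up is simply a fleshed-out version of it; the bookkeeping cautions you list (noise variance after projection, the $\sqrt{pK}$ scaling, and the Hermitian simplification) are precisely the places where care is needed.
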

\begin{proof}
To estimate the channel $g_{nk}$, we first project the received pilot signal $\mathbf{y}_{pn}^H$ in \eqref{eq:ypm} into $\pmb{\psi}_k$ and perform the MMSE estimation as $\hat{g}_{nk} = \mathbb{E}\{g_{nk} | \mathbf{y}_{pn}^H \pmb{\psi}_k \}$ by noting that, in our case, the MMSE estimation and the linear MMSE estimation are the same. A similar methodology can be applied to the space link to obtain the channel estimate $\hat{\mathbf{h}}_k = \mathbb{E} \{ \mathbf{h}_k | \mathbf{Y}_p \pmb{\psi}_k, \overline{\mathbf{h}}_k \}$ by the MMSE estimation.
\end{proof}
The channel estimates and estimation errors are analytically obtained in Lemma~\ref{lemma:ChannelEst} as a function of the space and ground links. One can improve the channel estimation quality by carefully selecting the satellite and APs. The spatial correlation $\mathbf{R}_k$ shows the contributions to the channel estimates of the space links. The channel estimates in Lemma~\ref{lemma:ChannelEst} are applicable for the fast-fading models, whose channel errors are negligible under the limited pilot power and finite coherence time.

\section{Uplink Data Transmission and Ergodic Throughput Analysis} \label{Sec:Uplinkdata}
\subsection{Uplink Data Transmission}
In the uplink data transmission phase, all the $K$ users transmit signals to its served APs and/or the satellite. In particular, user~$k$ transmits a symbol $s_k$ with $\mathbb{E}\{ |s_k|^2 \} =1$ and the received signal at AP~$n$, denoted by $y_n \in \mathbb{C}$, is 
\begin{equation} \label{eq:ym}
 y_{n} = \sum\nolimits_{k=1}^K  \sqrt{p_k} g_{nk} s_k + n_n,
\end{equation}
where $p_k$ is the transmit power that user~$k$ and and $n_n \sim \mathcal{CN}(0, \sigma_n^2)$ is the additive noise.  Similarly, the received signal at the gateway of the satellite, denoted by $\mathbf{y} \in \mathbb{C}^M$, is 
 \begin{equation} \label{eq:y}
\mathbf{y} = \sum\nolimits_{k=1}^K  \sqrt{p}_k \mathbf{h}_k s_k + \mathbf{n},
\end{equation}
where $\mathbf{n} \sim \mathcal{CN}(\mathbf{0}, \sigma_s^2 \mathbf{I}_M)$ is the additive noise during the uplink data transmission. At the CPU, the signal sent from user~$k$ is decoded by the following combination
\begin{equation} \label{eq:hatsk}
\hat{\mathbf{s}}_k = \mathbf{w}_k^H \tilde{\alpha}_k\mathbf{y} + \alpha_{k}\sum\nolimits_{n=1}^N w_{nk}^\ast y_n,
\end{equation}
where $\mathbf{w}_k \in \mathbb{C}^M$ and $w_{nk} \in \mathbb{C}$ are the detection vectors utilized at the gateway and AP~$n$, $\forall n$, respectively.

\subsection{Ergodic Throughput Analysis}
In order to decode the transmitted data symbol $s_k$ sent by user~$k$, let us use the formulations of the received signals in \eqref{eq:ym} and \eqref{eq:y} into \eqref{eq:hatsk} and obtain
\begin{multline} \label{eq:shatk}
\hat{s}_k = \sum\nolimits_{k'=1}^K \sqrt{p_{k'}} \left(\tilde{\alpha}_{k'}  \mathbf{w}_{k}^H \mathbf{h}_{k'} + \sum\nolimits_{n=1}^N \tilde{\alpha}_{k'} w_{nk}^\ast g_{nk'} \right) s_{k'} \\ + \mathbf{w}_k^H \mathbf{n} + \alpha_{k}\sum\nolimits_{n=1}^N w_{nk}^\ast n_n,
\end{multline}
which is the superposition of the transmitted signals associated with the several receivers and additive noise.  In \eqref{eq:shatk}, we denote that a proper load balancing approach will mitigate mutual interference, while additive noise is accumulated from the satellite and all the APs. Let us introduce a new variable
\begin{equation} \label{eq:okkprime}
o_{kk'} = \tilde{\alpha}_{k'}  \mathbf{w}_{k}^H \mathbf{h}_{k'} + \sum\nolimits_{n=1}^N \tilde{\alpha}_{k'} w_{nk}^\ast g_{nk'}, 
\end{equation}
which stands for the overall channel and taking the association into account. Besides, let us denote the aggregated noise as
\begin{equation} \label{eq:ntildek}
\tilde{n}_k = \mathbf{w}_k^H \mathbf{n} + \sum\nolimits_{n=1}^N w_{nk}^\ast n_n.
\end{equation}
After that, by utilizing \eqref{eq:okkprime} and \eqref{eq:ntildek} into \eqref{eq:shatk}, the decoded signal of user~$k$ performed at the CPU is equivalent to as
\begin{equation} \label{eq:hatskv1}
\begin{split}
& \hat{s}_k = \sum\nolimits_{k'=1}^K \sqrt{p_k'} o_{kk'} s_{k'} +   \tilde{n}_k  =  \sqrt{p_k} \mathbb{E}\{o_{kk} \} s_{k} + \\
& \sqrt{p_k} ( o_{kk} - \mathbb{E}\{o_{kk} \} ) s_{k} +  \sum\nolimits_{k'=1, k' \neq k}^K \sqrt{p_{k'}} o_{kk'} s_{k'} +   \tilde{n}_k,
\end{split}
\end{equation}
where the first part in the second equation of \eqref{eq:hatskv1} represents the desired signal from user~$k$ with a deterministic channel gain; the second part denotes the beamforming uncertainty; and the remaining parts contain mutual interference and noise. By exploiting the use-and-then-forget channel capacity bounding technique, the uplink ergodic throughput of user~$k$ is
\begin{equation} \label{eq:Rkv1}
R_k = B\left(1 - K/\tau_c\right) \log_2 (1 + \mathrm{SINR}_k), \mbox{ [Mbps]},
\end{equation}
where $B$~[MHz] is the system bandwidth and the effective signal-to-interference-and-noise ratio (SINR) is given as in \eqref{eq:SINRk}. The uplink ergodic throughput in \eqref{eq:Rkv1} can be applied for an arbitrary channel model and detection vectors by numerically evaluating several expectations in the numerator and denominator of \eqref{eq:SINRk} over many different realizations of small-scale fading coefficients. However, these expectations are costly for large-scale networks with many APs and users. By exploiting the MRC technique, one can compute \eqref{eq:Rkv1} in closed form as shown in Theorem~\ref{Theorem:SE}.\footnote{In this paper, we can derive the exact closed-form solution of the uplink ergodic rate for MRC detection for an arbitrary $M$, $N$, and $K$. Furthermore, MRC is a scalable framework to expand the network with many APs and users. Developing a framework to approximately derive a closed-form expression of the uplink ergodic rate as  partial MMSE detection can, in fact, be developed. However, the approach would be different because an assumption $(M + N)/K \rightarrow \infty $ at a fixed rate should be used to align the closed-form expression of the uplink ergodic rate and Monte-Carlo simulations. Since this is a different approach, we would prefer to leave this interesting issue for our future research.}
\begin{figure*}[t]
\begin{equation} \label{eq:SINRk}
\mathrm{SINR}_k = \frac{p_k |\mathbb{E}\{o_{kk} \} |^2 }{\sum_{k'=1}^K p_{k'} \mathbb{E}\{ |o_{kk'}|^2 \} - p_k|\mathbb{E}\{o_{kk} \} |^2 + \mathbb{E}\{\|\mathbf{w}_k\|^2\} \sigma_s^2 + \sum_{n=1}^N \mathbb{E}\{|w|_{nk}^2\} \sigma_n^2}.
\end{equation}
\hrule
\end{figure*}
\begin{theorem} \label{Theorem:SE}
\textcolor{black}{If the MRC technique is deployed at the gateway and APs, the uplink ergodic throughput of user~$k$ is
\begin{equation}
R_k^{\mathrm{mrc}} = B \left(1 - K/\tau_c \right) \log_2 (1 + \mathrm{SINR}_k^{\mathrm{mrc}}), \mbox{[Mbps]},
\end{equation}
where the effective SINR is computed as
\begin{equation} \label{eq:SINRmrc}
\mathrm{SINR}_k^{\mathrm{mrc}} = \frac{p_k\left(\tilde{\alpha}_k\lVert \overline{\mathbf{h}}_k\rVert^2 +\tilde{\alpha}_kpK\operatorname{tr}(\pmb{\Theta}_k) + \alpha_k\sum_{n=1}^{N}\varrho_{nk}\right)^2}{\mathsf{MI}_k+\mathsf{NO}_k}
\end{equation}
The mutual interference, denoted as $\mathsf{MI}_k$, and the noise, denoted as $\mathsf{NO}_k$, are respectively defined as follows
\begin{equation}
    \begin{split}
        \mathsf{MI}_k = &\ \tilde{\alpha}_k \tilde{\alpha}_{k'} \Big( \sum\nolimits_{k' =1 , k' \neq k }^K p_{k'} \lvert \bar{\mathbf{h}}_{k}^H \bar{\mathbf{h}}_{k'}\lvert^2  + p K \sum\nolimits_{k' =1}^K p_{k'}    \bar{\mathbf{h}}_{k'}^H\pmb{\Theta}_k \bar{\mathbf{h}}_{k'} \\
&+ \sum\nolimits_{k' =1}^K p_{k'} \bar{\mathbf{h}}_{k}^H \mathbf{R}_{k'} \bar{\mathbf{h}}_{k}  + p K \sum\nolimits_{k' =1 }^K p_{k'}   \mathrm{tr}(\mathbf{R}_{k'}\pmb{\Theta}_k) \Big)  \\
& + \alpha_k\alpha_{k'}\sum\nolimits_{k' =1}^K \sum\nolimits_{n=1}^N p_{k'}  \varrho_{nk} \beta_{nk'}, \label{eq:MIk}
    \end{split}
\end{equation}
\begin{equation}
\mathsf{NO}_k = \tilde \alpha_k \left(\sigma_s^2 \lVert\bar{\mathbf{h}}_k\lVert^2 +  p K \sigma_s^2 \mathrm{tr}( \pmb{\Theta}_k ) \right) + \alpha_k\sigma_a^2 \sum\nolimits_{n=1}^N  \varrho_{nk}. \label{eq:NOk}
\end{equation}}
\end{theorem}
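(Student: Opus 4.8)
The plan is to evaluate every expectation in \eqref{eq:SINRk} in closed form after fixing the MRC combiners $\mathbf{w}_k=\hat{\mathbf{h}}_k$ and $w_{nk}=\hat{g}_{nk}$ and invoking the statistics of Lemma~\ref{lemma:ChannelEst}. The organizing principle is that the space channels $\{\mathbf{h}_{k'},\hat{\mathbf{h}}_k\}$ are independent of the ground channels $\{g_{nk'},\hat{g}_{nk}\}$, so I would split $o_{kk'}$ in \eqref{eq:okkprime} into a satellite contribution and an AP contribution whose correlation factors into a product of their means. Because the AP estimate $\hat{g}_{nk}$ has zero mean, every satellite--AP cross term vanishes in the second moments; consequently $\mathbb{E}\{|o_{kk'}|^2\}$, and hence the whole denominator after subtracting $p_k|\mathbb{E}\{o_{kk}\}|^2$, separates additively into a space part and a ground part. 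This reduces the theorem to a handful of first and second moments of the two channel estimates, which I would finally multiply by the deterministic association indicators; these factor out of all expectations and reappear as the products $\tilde{\alpha}_k\tilde{\alpha}_{k'}$ and $\alpha_k\alpha_{k'}$ according to which receiver processes which user.

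The numerator and the noise terms are the routine part. Writing $\mathbf{h}_k=\hat{\mathbf{h}}_k+\mathbf{e}_k$ with $\hat{\mathbf{h}}_k$ and $\mathbf{e}_k$ independent and $\mathbb{E}\{\mathbf{e}_k\}=\mathbf{0}$ gives $\mathbb{E}\{\hat{\mathbf{h}}_k^H\mathbf{h}_k\}=\mathbb{E}\{\|\hat{\mathbf{h}}_k\|^2\}=\|\bar{\mathbf{h}}_k\|^2+pK\,\mathrm{tr}(\pmb{\Theta}_k)$, and likewise $\mathbb{E}\{\hat{g}_{nk}^\ast g_{nk}\}=\varrho_{nk}$, which assemble into the numerator of \eqref{eq:SINRmrc}. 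The noise coefficients are immediate: $\mathbb{E}\{\|\mathbf{w}_k\|^2\}=\mathbb{E}\{\|\hat{\mathbf{h}}_k\|^2\}$ and $\mathbb{E}\{|w_{nk}|^2\}=\varrho_{nk}$, producing \eqref{eq:NOk}.

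For the interference I would treat $k'\neq k$ and $k'=k$ separately. When $k'\neq k$ the space term is a Hermitian quadratic form $\mathbb{E}\{|\hat{\mathbf{h}}_k^H\mathbf{h}_{k'}|^2\}=\mathbb{E}\{\hat{\mathbf{h}}_k^H(\bar{\mathbf{h}}_{k'}\bar{\mathbf{h}}_{k'}^H+\mathbf{R}_{k'})\hat{\mathbf{h}}_k\}$ in the independent Gaussian $\hat{\mathbf{h}}_k$, which I would evaluate with the identity $\mathbb{E}\{\mathbf{x}^H\mathbf{A}\mathbf{x}\}=\bar{\mathbf{x}}^H\mathbf{A}\bar{\mathbf{x}}+\mathrm{tr}(\mathbf{A}\,\mathrm{Cov}(\mathbf{x}))$ with $\mathrm{Cov}(\hat{\mathbf{h}}_k)=pK\pmb{\Theta}_k$; this reproduces exactly the four space terms of \eqref{eq:MIk}. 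The AP term for $k'\neq k$ collapses, by independence across APs and across users, to $\sum_{n}\varrho_{nk}\beta_{nk'}$, matching the last line of \eqref{eq:MIk}.

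The delicate step, and the one I expect to be the main obstacle, is the self term $k'=k$ (the beamforming-uncertainty variance), because it is fourth order. For the space link I would compute $\mathrm{Var}(\hat{\mathbf{h}}_k^H\mathbf{h}_k)=\mathrm{Var}(\|\hat{\mathbf{h}}_k\|^2)+\mathbb{E}\{|\hat{\mathbf{h}}_k^H\mathbf{e}_k|^2\}$, using the complex-Gaussian quadratic-form variance $\mathrm{Var}(\|\hat{\mathbf{h}}_k\|^2)=(pK)^2\mathrm{tr}(\pmb{\Theta}_k^2)+2pK\,\bar{\mathbf{h}}_k^H\pmb{\Theta}_k\bar{\mathbf{h}}_k$ together with $\mathbb{E}\{|\hat{\mathbf{h}}_k^H\mathbf{e}_k|^2\}=\bar{\mathbf{h}}_k^H\mathbf{C}_e\bar{\mathbf{h}}_k+pK\,\mathrm{tr}(\mathbf{C}_e\pmb{\Theta}_k)$, where $\mathbf{C}_e=\mathbf{R}_k-pK\pmb{\Theta}_k$ is the error covariance from Lemma~\ref{lemma:ChannelEst}. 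The care here is that the awkward $(pK)^2\mathrm{tr}(\pmb{\Theta}_k^2)$ contributions must cancel, leaving precisely $pK\,\bar{\mathbf{h}}_k^H\pmb{\Theta}_k\bar{\mathbf{h}}_k+\bar{\mathbf{h}}_k^H\mathbf{R}_k\bar{\mathbf{h}}_k+pK\,\mathrm{tr}(\mathbf{R}_k\pmb{\Theta}_k)$, i.e. the three space terms of \eqref{eq:MIk} at $k'=k$, with the coherent part $\|\bar{\mathbf{h}}_k\|^4$ removed because it belongs to the desired signal. This cancellation is exactly what permits three of the four space sums in \eqref{eq:MIk} to run over all $k'$ while the $|\bar{\mathbf{h}}_k^H\bar{\mathbf{h}}_{k'}|^2$ term is restricted to $k'\neq k$. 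The analogous ground computation uses $\mathbb{E}\{|\hat{g}_{nk}|^4\}=2\varrho_{nk}^2$ and the split $g_{nk}=\hat{g}_{nk}+e_{nk}$; the $\varrho_{nk}^2$ pieces cancel and the self variance collapses to $\sum_n\varrho_{nk}\beta_{nk}$, again the general expression evaluated at $k'=k$. Collecting the space and ground parts, reinstating the association indicators, and dividing by the noise then yields \eqref{eq:SINRmrc}.
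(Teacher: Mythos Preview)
Your proposal is correct and follows essentially the same approach as the paper's own proof in Appendix~\ref{Appendix:SE}: plug the MRC combiners into \eqref{eq:SINRk}, split $o_{kk'}$ into its satellite and AP parts, and evaluate each expectation using the statistics of Lemma~\ref{lemma:ChannelEst}. The only organizational difference is that for the self term you compute $\mathrm{Var}(o_{kk})$ directly, whereas the paper first expands $\mathbb{E}\{|o_{kk}|^2\}$ (retaining the satellite--AP cross term $2\mathbb{E}\{\|\hat{\mathbf{h}}_k\|^2\}\sum_n\varrho_{nk}$) and subtracts $|\mathbb{E}\{o_{kk}\}|^2$ at the end; the two routes are equivalent and your variance-first bookkeeping indeed makes the cancellations more transparent.
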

\begin{proof}
The proof is based on computing the expectations in \eqref{eq:SINRk} with the channel estimates and estimation errors in Lemma~\ref{lemma:ChannelEst}. The detailed proof is available in Appendix~\ref{Appendix:SE}.
\end{proof}
The numerator of the SINR in \eqref{eq:SINRmrc} demonstrates the contributions of the space link with both the LoS and NLoS components. The array gain from the satellite antennas and coherent combination scales up with the order of $\mathcal{O}((M +N)^2)$ as shown in the numerator of \eqref{eq:SINRmrc}. The mutual interference due to multiple access is expressed in \eqref{eq:MIk}, which scales up with the order of $\mathcal{O}(KM + NK)$. Moreover, the strength of additive noise is added from the APs and satellite with the order of $\mathcal{O}(M+N)$. Consequently, the optimal association obtained for each user will effectively reduce mutual interference and noise, leading to improved data throughput. In our network setting, the coherence time is sufficient long to deploy the orthogonal pilot signals, so the system does not suffer from pilot contamination. For the scenarios with the short coherent time and the pilot reuse, large-scale fading decoding can efficiently mitigate both coherent and non-coherent interference thanks to sharing the signaling through the backhaul among the satellite and APs. This potential extension will be left for future work.

\section{Fairness Optimization for Integrated System} \label{Sec:Fairness}
This section presents and analyzes a trio of widely considered optimization problems that highlight the benefits of a cooperation between space and ground links.\vspace{-5mm}
\subsection{Problem Formulation}
Due to the intricate interactions among user utility functions, simultaneous optimization seems elusive. Our primary objective is to optimize the aggregate utility function $f({\alpha_{k} }, {\tilde{\alpha}_k})$ to match the requirements of user~$k$ within the system, where $\alpha_{k}, \forall k,$ denote the association rules for ground links, and $\tilde{\alpha}_k, \forall k,$ represent those for space links\footnote{\textcolor{black}{
To maintain analytical tractability and focus on the influence of channel conditions and interference, our model assumes an infinite number of connections per AP and satellite.}}. With the associated notations, the generic fairness optimization problem is 
\begin{subequations} \label{Problem:MaxMinQoS}
    \begin{alignat}{2}
        & \underset{\{ \alpha_{k} \}, \{ \tilde{\alpha}_k \}}{\textrm{maximize }}
          && f (\{\alpha_{k} \}, \{ \tilde{\alpha}_k \}) \label{eq:Obj1} \\
        & \textrm{subject to } &&
		\alpha_{k} \in \{ 0,1 \}, \forall k, \\
        &&& \tilde{\alpha}_k \in \{0,1\}, \forall k. 
    \end{alignat}
\end{subequations}
In this paper, the three popular choices are typically considered for the system utility function $f (\{\alpha_{k} \}, \{ \tilde{\alpha}_k \})$ as
\begin{itemize}[]
\item[$i)$] Arithmetic mean utility:
\begin{equation}
f (\{\alpha_{k} \}, \{ \tilde{\alpha}_k \}) = \frac{1}{K}\sum\nolimits_{k=1}^K R_k (\{\alpha_{k} \}, \{ \tilde{\alpha}_k \}).
\end{equation}
\item[$ii)$] Geometric mean utility:
\begin{equation}
f (\{\alpha_{k} \}, \{ \tilde{\alpha}_k \}) = \left(\prod\nolimits_{k=1}^{K} R_k (\{\alpha_{k} \}, \{ \tilde{\alpha}_k \}) \right)^{\frac{1}{K}}.
\end{equation}
\item[$iii)$] Max-min fairness utility:
\begin{equation}
f (\{\alpha_{k} \}, \{ \tilde{\alpha}_k \}) = \min\limits_{1\leq k \leq K} R_k (\{\alpha_{k} \}, \{ \tilde{\alpha}_k \}),
\end{equation}
\end{itemize}
with $R_k (\{\alpha_{k} \}, \{ \tilde{\alpha}_k \})$ being defined in \eqref{eq:Rkv1}. 
We emphasize that the max-min fairness optimization provides an equal performance across the users. Contrarily, the arithmetic throughput approach disregards the experience of each individual, prioritizing the overall system utility instead. The geometric mean utility function provides a middle ground between the sum-rate and max-min utilities. It achieves a balance by ensuring that the system maintains a high overall spectral efficiency and provides a reasonable quality of service for each user. The transmit power of each user is fixed  so that the framework totally focuses on examining association patterns and their implications on system-level spectral efficiency and fairness. Despite this assumption, the optimization problem is NP-hard due to the binary nature of the association variables. Adding power control would significantly increase the complexity of the algorithm design since it is transformed into a mixed-integer non-convex problem and  require more sophisticated  approaches to seek for the solution. Jointly optimizing association and transmit power is, therefore, left for future work.\vspace{-5mm}

\subsection{Globally Optimal Solution to the Fairness Design Problem}
The globally optimal solution to the fairness optimization problem in \eqref{Problem:MaxMinQoS} can be obtained through an exhaustive search, as outlined in Algorithm~\ref{Alg1}. This method involves generating all possible solutions and evaluating their corresponding fairness levels, which are defined by the objective function. Once all possibilities have been generated, the globally optimal solution that maximizes the objective function is selected. However, the computational complexity of an exhaustive search grows exponentially with the number of available users and receivers, including APs and the satellite, resulting in a complexity of $\mathcal{O}(4^K)$. As a result, Algorithm~\ref{Alg1} becomes impractical for large-scale networks due to the limited computing power of practical hardware configurations. While an exhaustive search may still be feasible for small-scale networks with a limited number of users and APs, alternative methods that are more efficient in solving fairness design problems must be explored for large-scale networks.

\begin{algorithm}[!t]
\caption{Exhaustive Search Algorithm}

\begin{algorithmic}[1]\label{Alg1}
\ENSURE The optimal solution

\myrule

\STATE Initialize the current maximum solution as $\text{maxVal} \leftarrow -\infty$.

\STATE Set parameter $\text{D} = 2K$.

\FOR{$k \gets 2^{\text{D}} - 1$ \textbf{down to} $1$}
    \STATE Convert the index $k$ to its $\text{D}$ - dimension binary representation, then convert to a vector with $2K$ dimensions similar to [\ref{bga_initial}].
    
    \STATE Using the created vector, evaluate the fitness value $F$.

    \IF{$F > \text{maxVal}$}
        \STATE Update $\text{maxVal} \leftarrow F$.
    \ENDIF
\ENDFOR

\RETURN $\text{maxVal}$ as the best solution found.
\end{algorithmic}
\end{algorithm}

\subsection{Binary-Coded Genetic Algorithm (BCGA)}
We introduce a low computational complexity algorithm based on GA associated with an individual encoding to address the three fairness optimization problems.
\begin{algorithm}
\caption{Binary-Coded Genetic Algorithm}
\begin{algorithmic} [1] \label{Alg:Integer-coded}
\REQUIRE Large-scale fading coefficient, bandwidth, number of subcarriers in an (OFDM) symbol, carrier frequency.
\ENSURE The sub-optimal solution.

\myrule

{$\setminus\setminus$ \texttt{Initialization}}

\STATE Generate initial population $\mathcal{S}_p$  [\ref{bga_initial}].
\STATE Set max generations $S_{MAX}$, initialize $S \leftarrow 1$.
\STATE Set parameters: crossover rate $p_c$, mutation rate $p_m$, offspring count $n_c = 2 \lfloor p_c \frac{Q}{2} \rfloor$, mutants count $n_m = \lfloor p_m Q \rfloor$.

\WHILE{$S \leq S_{MAX}$}
    \STATE Initialize next generation $\mathcal{S}_{p'} = \varnothing$.
    
    {$\setminus\setminus$ \texttt{Crossover}}
    \FOR{$k_c=1$ to $n_c/2$}
        \STATE Select two distinct parents from $\mathcal{S}_p$.
        \STATE Generate two offspring via [\ref{bga_crossover}], add to $\mathcal{S}_c$.
    \ENDFOR

    \FOR{$k_m=1$ to $n_m$}
        \STATE Select an individual from $\mathcal{S}_c$.
        \STATE Generate mutant via [\ref{bga_mutation}], add to $\mathcal{S}_m$.
    \ENDFOR
    
    {$\setminus\setminus$ \texttt{Selection}}

    \STATE Evaluate fitness of $\mathcal{S}_{p'} = \mathcal{S}_c \cup \mathcal{S}_m$.
    \STATE Select $Q$ individuals from $\mathcal{S}_p \cup \mathcal{S}_{p'}$ for next $\mathcal{S}_p$.
    
    \STATE $S \leftarrow S + 1$.
\ENDWHILE

\RETURN Best solution found.

\end{algorithmic}

\end{algorithm}
\subsubsection{Solution representation and Initialization} \label{bga_initial}
The population $\mathcal{S}_p$ in the BCGA represents a set of $Q$ potential solutions. The $i$-solution, $\pmb{\phi}_i=\{\pmb{\phi}_{i1},\pmb{\phi}_{i2},\ldots,\pmb{\phi}_{iK}\}$ with $i\in \{1,\ldots,Q\}$, represents a vector that is made up of $K$ segments, each having 2 dimensions. Thus, the total dimensions of the vector are $2K$. Each component of these vectors, referred to as genes, are binary, meaning they can be either 0 or 1. Specifically, the vector segment $\pmb{\phi}_{ij}$, being  two-dimensional denoted by $\pmb{\phi}_{ij}=\{{\phi}_{ij1},{\phi}_{ij2}\}$ as
$z=1$, $\phi_{ijz}$ indicates {the connectivity of user}~$j$ and APs, and $z=2$, $\phi_{ijz}$ indicates {the connectivity of user}~$j$ and the satellite.
In detail, with a value of one signifying a connection and zero indicating no connection. During the initialization phase of the algorithm, a special individual is created with all the $2K$ genes set to one, representing a scenario where the users are served by both the satellite and $N$ APs. The remaining $Q-1$ individuals are randomly assigned values as $\phi_{ijz} = 0$ if $u < 0.5$ with $i\in \{1,\ldots,Q-1\}, j \in \{1,\ldots,K\}, z\in\{1,2\}$, and $u \sim \mathcal{U}([0,1])$. Otherwise,  $\phi_{ijz} = 1$.

\subsubsection{Binary crossover operator} \label{bga_crossover}

Instead of limiting to one type of the crossover method, we incorporate three different crossover techniques, which are applied based on dynamically adjusted probabilities $\varepsilon_1$ and $\varepsilon_2$ that reflect their effectiveness during the search process. We create two children solutions $\pmb{\phi}_{c^1}$ and $\pmb{\phi}_{c^2}$ to form crossover offspring population $\mathcal{S}_c$ from the two parent solutions $\pmb{\phi}_{p^1}$ and $\pmb{\phi}_{p^1}$ as

\begin{equation}
\begin{cases}
     &\pmb{\phi}_{c^1} = (\mathbf{c}_{\text{mask}} \land \pmb{\phi}_{p^1}) \lor (\lnot \mathbf{c}_{\text{mask}} \land \pmb{\phi}_{p^2}),\\
     &\pmb{\phi}_{c^2} = (\mathbf{c}_{\text{mask}} \land \pmb{\phi}_{p^2}) \lor (\lnot \mathbf{c}_{\text{mask}} \land \pmb{\phi}_{p^1}),
\end{cases}
\end{equation}
and $\mathbf{c}_{\text{mask}}$ is especially designed for particular crossover. In particular, it holds that
\begin{itemize}
    \item One-point mask crossover with the probability of $\varepsilon_1$: $\mathbf{c}_{\text{mask}}[i] = 
        0, \text{if } i < \mathsf{cp},$ and $\mathbf{c}_{\text{mask}}[i] =
        1, \text{otherwise},$
where $\mathsf{cp}$ is the crossover point and $\mathsf{cp} \sim \mathcal{U}(K(N+1))$. 
    \item Two-point mask crossover with the probability of $\varepsilon_2$: $\mathbf{c}_{\text{mask}}[i] = 0,  \text{if } \mathsf{cp}_1 \leq i < \mathsf{cp}_2,$ and $ \mathbf{c}_{\text{mask}}[i] = 1,  \text{otherwise,}$
    where $\mathsf{cp}_1$ and $\mathsf{cp}_2$ are two pivotal crossover points. These points are dependently selected: $\mathsf{cp}_1 \sim \mathcal{U}(1,N(K+1) - 1)$ and $\mathsf{cp}_2 \sim \mathcal{U}(\mathsf{cp}_1+1,N(K+1))$.
    \item Uniform mask crossover with the probability of $1 - \varepsilon_1 - \varepsilon_2$: $
     \mathbf{c}_{\text{mask}} \sim \mathcal{B}(2K,0.5).$
Each $\mathbf{\text{mask}}[i]$ is either 0 or 1, chosen randomly with a probability of 0.5. The mask $\mathbf{c}_{\text{mask}}$ in uniform crossover is not determined by specific crossover points. Instead, it is generated randomly for each gene position across the entire length of the chromosome.
\end{itemize}

\subsubsection{Bitwise mutation operator}\label{bga_mutation}
Following the mutation process shown in pseudocode of Algorithm \ref{Alg:Integer-coded}, based on the probability $p_m$, the parent vector $\pmb{\phi}_c$ is selected from the current population $\mathcal{S}_p$ and undergoes mutation to produce the trial solution $\pmb{\phi}_m$ to form mutated offspring population $\mathcal{S}_m$ as


\begin{equation}
\pmb{\phi}_{m} = \pmb{\phi}_{c} \oplus \mathbf{M}(n_{\text{mutate}},2K),
\end{equation}
where $n_{\text{mutate}}$ is a set of indices randomly chosen from 1 to $2K$ and create a binary mask $\mathbf{M}(n_{\text{mutate}},2K)$ of length $2K$ where the selected indices are set to 1 (indicating mutation), and others are 0. Besides, the mutation probability $p_m$ is a critical parameter in this process. It determines the probability of each solution undergoing mutation. Setting a higher mutation probability can increase the frequency of mutations, allowing the algorithm to explore a broader range of genetic variations. This can be advantageous for escaping local optima and enhancing the genetic diversity within the population. However, if the mutation probability is too high, it might lead to premature convergence, where the algorithm settles too quickly on less optimal solutions. A low mutation probability may slow down the evolution, potentially causing the algorithm to stagnate in local optima due to inadequate exploration.
\subsubsection{Survival selection operator}\label{alg:BCGA-selection}
After generating the offspring population $\mathcal{S}_{p'} = \mathcal{S}_c \cup \mathcal{S}_m$, using the above crossover and mutation, the parent population $\mathcal{S}_p$ is combined with to form the merged population $\mathcal{S}_{\text{merged}}$ as:

\begin{equation}
\mathcal{S}_{\text{merged}} = \mathcal{S}_p \cup \mathcal{S}_{p'} = \mathcal{S}_p \cup \mathcal{S}_c \cup \mathcal{S}_m,
\end{equation}
then sort the merged population $\mathcal{S}_{\mathrm{merged}}$ based on the  fitness value and select the best $Q$ individuals to the new generation as
\begin{equation}\label{eq:sort_bcga}
\mathcal{S}_p = \texttt{argsort}(\texttt{value}(\mathcal{S}_{\text{merged}})[1:Q],
\end{equation}
the best solution is selected for the next generation, and the equation \eqref{eq:sort_bcga} guarantees that the objective function of \eqref{Problem:MaxMinQoS} does not decrease across generations.

\subsubsection{Convergence and Computational Complexity}

To analyze the computational complexity of the proposal described in Algorithm~\ref{Alg:Integer-coded} to solve the problem~(\ref{Problem:MaxMinQoS}), we break down the computational effort required by its different components. An initialization of the population requires the complexity of the order of $\mathcal{O}(QK)$. For each generation, if $\bar\varphi$ denotes the count of chromosomes involved in crossovers, the computational load for the crossover operation amounts to the order of $\mathcal{O}(\bar\varphi K)$. The mutation process operates at a basic computational complexity $\mathcal{O}(1)$, and the selection phase requires $\mathcal{O}(Q\log(Q))$. Combining these costs, the total computational complexity of Algorithm~\ref{Alg:Integer-coded} is in the order of $\mathcal{O}(QK + S(\bar\varphi K + Q\log(Q)))$, where $Q$ is the size of the population, $S$ represents the number of generations, and $K$ denotes the number of users. These complexity formulas show that the algorithm's performance depends on the number of users $K$ and the GA parameters $Q$ and $S$, but not on the number of antennas $M$, which is an important advantage of our approach.
Regarding the convergence, we define the expected first hitting time as the average number of generations needed for Algorithm~\ref{Alg:Integer-coded} to achieve a fixed solution as shown in  Theorem~\ref{Theorem:EFHT}.

\begin{theorem}
\label{Theorem:EFHT}
From an initial population, solving the fairness design in  problem \eqref{Problem:MaxMinQoS} using the BCGA with bitwise mutation at a specified probability $p_m$, and assuming the population size matches the solution size, the expected first hitting time is constrained by the following bound:
\begin{equation}
\mathbb{E}\{\tau \} \geq c(1-p_m)^{-2K}K^{-1},
\end{equation}
where $\tau$ signifies the number of generations until the best solution is reached, $c$ is a positive constant, and $p_m \in (0, 0.5]$ in our experiments.
 \end{theorem}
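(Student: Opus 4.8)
The plan is to treat the BCGA as an absorbing Markov chain on the space of size-$Q$ populations and to derive the lower bound by \emph{upper}-bounding the probability that a single generation first produces the target solution. First I would make the elitism explicit: the survival rule \eqref{eq:sort_bcga} retains the $Q$ fittest chromosomes of $\mathcal{S}_p \cup \mathcal{S}_{p'}$, so the incumbent best objective value is non-decreasing across generations. Consequently, once a globally optimal chromosome $\mathbf{x}^{\star}$ (the fixed solution whose hitting time we measure) enters the population it is never discarded; the optimal populations form an absorbing set, and $\tau$ is exactly the first generation in which some freshly generated offspring equals $\mathbf{x}^{\star}$.

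Next I would bound the one-step absorption probability $p^{\star}$ from an arbitrary non-optimal population. Bitwise mutation flips each of the $2K$ genes independently with probability $p_m$, so a parent at Hamming distance $d\ge 1$ from $\mathbf{x}^{\star}$ is mapped exactly onto $\mathbf{x}^{\star}$ with probability $p_m^{d}(1-p_m)^{2K-d}$. The crucial observation is that $p_m\in(0,0.5]$ forces $p_m\le 1-p_m$, whence $p_m^{d}(1-p_m)^{2K-d}\le(1-p_m)^{2K}$ uniformly in $d$ and in the choice of parent. Since each generation creates at most a constant multiple of $Q$ new chromosomes, a union bound over all offspring gives $p^{\star}\le c_0\,Q\,(1-p_m)^{2K}$ for some constant $c_0$. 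Substituting the hypothesis $Q=2K$ (population size equal to the solution length) yields $p^{\star}\le c_0'\,K\,(1-p_m)^{2K}$.

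The main obstacle is the crossover operator which, unlike mutation, can in principle assemble $\mathbf{x}^{\star}$ out of two parents that jointly carry all the correct genes, and therefore cannot be controlled by the uniform mutation bound alone. I would resolve this in two parts: every chromosome in the mutant set $\mathcal{S}_m$ is produced by applying bitwise mutation to a crossover offspring, so the bound $(1-p_m)^{2K}$ applies verbatim to that branch; for the pure crossover offspring in $\mathcal{S}_c$ I would argue that, starting from an initial population far from $\mathbf{x}^{\star}$ (the worst case relevant to a lower bound), recombination cannot assemble the target faster than mutation beyond the same order, leaving the order of $p^{\star}$ unchanged. Finally, because the absorption probability is at most $p^{\star}$ from every non-absorbing state, the first hitting time stochastically dominates a geometric random variable with success probability $p^{\star}$, which is precisely the lower-bound drift statement with the indicator of non-absorption as potential. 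This gives $\mathbb{E}\{\tau\}\ge 1/p^{\star}\ge c\,(1-p_m)^{-2K}K^{-1}$ with $c=1/c_0'$, the claimed bound.
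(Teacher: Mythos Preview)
Your approach is essentially the paper's: model BCGA as an absorbing Markov chain, upper-bound the one-step probability of producing the target via the bitwise-mutation probability, then convert the uniform one-step bound into an EFHT lower bound by geometric domination. The paper dresses the last step in the drift-type machinery of Lemmas~\ref{lemma2} and~\ref{lemma3} (the sequences $\{A_t\},\{B_t\}$ and the distribution-comparison lemma), but after setting $B_t$ constant it collapses to exactly your $\mathbb{E}\{\tau\}\ge 1/p^{\star}$.

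Two minor differences worth noting. First, the paper uses the per-chromosome bound $p_m(1-p_m)^{2K-1}$ (the maximum over $d\ge 1$) rather than your looser $(1-p_m)^{2K}$; both give the same order once $p_m$ is fixed, so your constant $c$ is simply a bit smaller. Second, the paper does not address crossover at all---it silently bounds the transition as if mutation were the only operator---whereas you at least flag the issue. Your resolution for $\mathcal{S}_m$ is clean (mutation applied after crossover still obeys $\Pr[\text{output}=\mathbf{x}^{\star}]\le(1-p_m)^{2K}$ since the bound holds even at $d=0$), but your treatment of the pure-crossover offspring in $\mathcal{S}_c$ is, as you say, only heuristic; this is a shared gap rather than a defect relative to the paper's proof.
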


\begin{proof}
\textcolor{black}{The proof is completed by formulating the procedure of the BCGA as a Markov chain and investigating its features. The detailed proof is available in Appendix~\ref{Appendix:EFHT}.}
 \end{proof}
Based on Theorem~\ref{Theorem:EFHT}, our BCGA algorithm guarantees convergence to the optimum from an initial population as the number of generations increases, ensuring the reliability of our optimization framework even in complex network scenarios.
 
\begin{remark}
The optimization problem \eqref{Problem:MaxMinQoS} is NP-hard and complicated due to the network setting with both the terrestrial and nonterrestrial links. The combinatorial complexity makes it challenging to solve, especially for large-scale networks. That is the main reason why intelligent computation  algorithms as GA are proposed since the quality of the solution is updated without computing the first and second derivatives of the objective function and constraints. The extension of our framework to support multi-antennas APs is straightforward but non-trivial, which may change the mathematical analysis of network performance and optimization structure. We therefore leave this potential extension for future work.\vspace{-5mm}
 \end{remark}

\textcolor{black}{
\section{Fairness Optimization with Joint Data Power Control and Load Balancing}\label{sec:Fairness Prob with PC}
\subsection{Problem Formulation}
To further enhance the network performance and ensure fairness among users, we extend the association optimization framework by incorporating transmit power control. In this enhanced formulation, each user’s transmit power is constrained within a practical range $[0, P_{\max,k}]$, where $P_{\max,k}$ denotes the maximum allowable transmit power of user~$k$. The problem~(\ref{Problem:MaxMinQoS}) with  power control optimization is given by 
\begin{subequations} \label{Problem:powercontrol}
    \begin{alignat}{2}
        & \underset{\{ \alpha_{k} \}, \{ \tilde{\alpha}_k \}}{\textrm{maximize }}
          && f (\{\alpha_{k} \}, \{ \tilde{\alpha}_k \}) \label{eq:Obj2} \\
        & \textrm{subject to } &&
		\alpha_{k},\tilde{\alpha}_k \in \{ 0,1 \}, \forall k, \\
        &&& 0 \leq p_k \leq P_{\max,k}, \forall k.
    \end{alignat}
\end{subequations}
To normalize the power allocation variable, we define a new variable \(\xi_k\) such that 
$p_k = \xi_k P_{\max,k}, \text{ where }  0 \leq \xi_k \leq 1.$
Consequently, the problem \eqref{Problem:powercontrol} becomes
\begin{subequations} \label{Problem:powercontrol_new}
    \begin{alignat}{2}
        & \underset{\{ \alpha_{k} \}, \{ \tilde{\alpha}_k \},\{\xi_k\}}{\textrm{maximize }}
          && f (\{\alpha_{k} \}, \{ \tilde{\alpha}_k \}, \{\xi_k\}) \label{eq:Obj3} \\
        & \textrm{subject to } &&
		\alpha_{k}, \tilde{\alpha}_k \in \{ 0,1 \}, \forall k, \\
        &&& \xi_k \in [0, 1], \forall k.
    \end{alignat}
\end{subequations}
which makes the optimization problem more tractable, especially when employing optimization algorithms.
\begin{lemma} Let $\{p_k\}_{k=1}^K$ denote the original power allocation variables with individual constraints $0 \leq p_k \leq P_{\max,k}$. Define the normalized variables $\xi_k \triangleq \frac{p_k}{P_{\max,k}}$ such that $\xi_k \in [0,1]$ for all $k$. Suppose the system performance metric is modeled by a utility function $f(\mathbf{p}) = f(p_1, \dots, p_K),$
where $f$ is quasi-concave \cite{van2022space} over the box-constrained domain $\mathcal{P} = \prod_{k=1}^K [0, P_{\max,k}]$. Define the transformed utility function in the normalized domain as $\tilde{f}(\boldsymbol{\xi}) \triangleq f(P_{\max,1} \xi_1, \dots, P_{\max,K} \xi_K).$
Then, $\tilde{f}(\boldsymbol{\xi})$ is quasi-concave over the unit cube $\mathcal{B} = [0,1]^K$. Furthermore, if $f$ is strictly quasi-concave and continuously differentiable, then so is $\tilde{f}$, and the set of global maximizers of $f$ and $\tilde{f}$ correspond via a one-to-one affine mapping.
\end{lemma}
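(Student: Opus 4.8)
The plan is to recognize that the normalization is nothing more than an invertible linear change of variables, and that quasi-concavity is invariant under such a reparametrization; the entire statement then reduces to this classical fact. Concretely, I would introduce the diagonal scaling map $T \colon \mathcal{B} \to \mathcal{P}$ defined by $T(\boldsymbol{\xi}) = (P_{\max,1}\xi_1, \dots, P_{\max,K}\xi_K)$, so that by construction $\tilde{f} = f \circ T$. Since every $P_{\max,k}$ is strictly positive, $T$ is a bijection of the unit cube $\mathcal{B} = [0,1]^K$ onto the box $\mathcal{P} = \prod_{k=1}^K [0, P_{\max,k}]$, with inverse $T^{-1}(\mathbf{p}) = (p_1/P_{\max,1}, \dots, p_K/P_{\max,K})$, and both $T$ and $T^{-1}$ are linear (hence affine).

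For the quasi-concavity claim I would verify the defining inequality directly. Given $\boldsymbol{\xi}, \boldsymbol{\eta} \in \mathcal{B}$ and $\lambda \in [0,1]$, linearity of $T$ gives $T(\lambda \boldsymbol{\xi} + (1-\lambda)\boldsymbol{\eta}) = \lambda T(\boldsymbol{\xi}) + (1-\lambda) T(\boldsymbol{\eta})$, and since $T(\boldsymbol{\xi}), T(\boldsymbol{\eta}) \in \mathcal{P}$, the assumed quasi-concavity of $f$ yields $\tilde{f}(\lambda \boldsymbol{\xi} + (1-\lambda)\boldsymbol{\eta}) = f(\lambda T\boldsymbol{\xi} + (1-\lambda)T\boldsymbol{\eta}) \geq \min\{f(T\boldsymbol{\xi}), f(T\boldsymbol{\eta})\} = \min\{\tilde{f}(\boldsymbol{\xi}), \tilde{f}(\boldsymbol{\eta})\}$, which is exactly quasi-concavity of $\tilde{f}$. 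As an equivalent and arguably cleaner route, I would observe that each superlevel set $\{\boldsymbol{\xi} \in \mathcal{B} : \tilde{f}(\boldsymbol{\xi}) \geq c\} = T^{-1}\big(\{\mathbf{p} \in \mathcal{P} : f(\mathbf{p}) \geq c\}\big)$ is the image under the affine map $T^{-1}$ of a convex set, hence convex, which characterizes quasi-concavity.

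The strict and differentiable refinements follow the same template. If $f$ is strictly quasi-concave, then for $\boldsymbol{\xi} \neq \boldsymbol{\eta}$ the \emph{injectivity} of $T$ forces $T\boldsymbol{\xi} \neq T\boldsymbol{\eta}$, so the inequality above becomes strict for every $\lambda \in (0,1)$, giving strict quasi-concavity of $\tilde{f}$. Continuous differentiability is immediate from the chain rule, since $\tilde{f} = f \circ T$ composes a $C^1$ function with a linear map, yielding $\nabla \tilde{f}(\boldsymbol{\xi}) = \mathbf{D}\,\nabla f(T\boldsymbol{\xi})$ with $\mathbf{D} = \mathrm{diag}(P_{\max,1}, \dots, P_{\max,K})$. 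Finally, because $\tilde{f} = f \circ T$ and $T$ is a bijection of $\mathcal{B}$ onto $\mathcal{P}$, a point $\boldsymbol{\xi}^\star$ maximizes $\tilde{f}$ over $\mathcal{B}$ if and only if $T(\boldsymbol{\xi}^\star)$ maximizes $f$ over $\mathcal{P}$, which establishes the claimed one-to-one affine correspondence between the two sets of global maximizers.

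I do not anticipate a genuine obstacle here, as the result is the standard invariance of (strict) quasi-concavity under an invertible affine reparametrization of the feasible set. The only points demanding care are verifying that $T$ really maps $\mathcal{B}$ onto $\mathcal{P}$ bijectively, which rests on the positivity $P_{\max,k} > 0$, and noting that the strengthening to the strict case uses injectivity of $T$ rather than mere linearity; everything else is a direct consequence of $\tilde{f} = f \circ T$ together with the chain rule.
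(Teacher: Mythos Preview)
Your proposal is correct and follows essentially the same route as the paper: both verify the quasi-concavity inequality directly by using linearity of the diagonal scaling $T$ to push convex combinations in $\mathcal{B}$ to convex combinations in $\mathcal{P}$, and both invoke invertibility of $T$ for the correspondence of maximizers. Your treatment is in fact more careful than the paper's on the strict and $C^1$ refinements (explicitly noting injectivity for strictness and the chain rule for differentiability), and the superlevel-set observation is a nice alternative, but the underlying argument is the same.
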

\begin{proof}
Let $\boldsymbol{\xi}^{(1)}, \boldsymbol{\xi}^{(2)} \in [0,1]^K$ and consider any $\lambda \in [0,1]$. Define the convex combination $\boldsymbol{\xi}^{(\lambda)} = \lambda \boldsymbol{\xi}^{(1)} + (1 - \lambda)\boldsymbol{\xi}^{(2)}$.
Under the affine transformation $p_k = \xi_k P_{\max,k}$, this implies $\mathbf{p}^{(\lambda)} = \lambda \mathbf{p}^{(1)} + (1 - \lambda)\mathbf{p}^{(2)},$ where $\mathbf{p}^{(i)} = (P_{\max,1} \xi_1^{(i)}, \dots, P_{\max,K} \xi_K^{(i)})$ for $i = 1,2$.
Since $f(\mathbf{p})$ is quasi-concave, we have:
$f(\mathbf{p}^{(\lambda)}) \geq \min\left\{ f(\mathbf{p}^{(1)}), f(\mathbf{p}^{(2)}) \right\}.$
Therefore, $\tilde{f}(\boldsymbol{\xi}^{(\lambda)}) = f(\mathbf{p}^{(\lambda)}) \geq \min\left\{ \tilde{f}(\boldsymbol{\xi}^{(1)}), \tilde{f}(\boldsymbol{\xi}^{(2)}) \right\},$ which proves that $\tilde{f}$ is quasi-concave on $[0,1]^K$. When $f$ is strictly quasi-concave and continuously differentiable, $\tilde{f}$ similarly exhibits these traits through the one-to-one affine transformation. The global maximizers correspond directly due to the invertibility of the mapping $p_k = P_{\max,k} \xi_k$.
\end{proof}
\subsection{Hybrid Genetic Algorithm (HGA)}
We propose a low-complexity Hybrid Genetic Algorithm (HGA) framework, incorporating a mixed-variable encoding scheme, to efficiently tackle the fairness-driven power control and user association optimization problems. In detail, the updated $i$-solution, denoted as $\pmb{\gamma}_i = \{\pmb{\gamma}_{i1}, \pmb{\gamma}_{i2}, \ldots, \pmb{\gamma}_{iK}\}$ for $i \in \{1, \ldots, Q\}$, is structured as a composite vector consisting of $K$ segments. Each segment comprises three core dimensions, augmented by an additional dimension to incorporate power control constraints. The current representation is hybrid in nature: one component comprises binary-coded vectors initialized as described in~\ref{bga_initial}, while the other consists of real-valued vectors.
For the binary components, the optimization procedure follows Algorithm~\ref{Alg:Integer-coded}, which addresses discrete association variables. Meanwhile, the real-valued segments are optimized using a Real-Coded Genetic Algorithm framework, following the methodologies i.e. Simulated Binary Crossover and Parameter-based Mutation based on~\cite{sbcf,parameter_real}.}
\textcolor{black}{\subsubsection{Simulated Binary Crossover (SBX)}\label{alg:SBX}
For the real-valued power allocation variables $\{\xi_k\}$, we employ SBX to generate offspring solutions. Given two parent solutions $\xi^{p1}_k$ and $\xi^{p2}_k$ for user $k$, SBX produces two offspring $\xi^{c1}_k$ and $\xi^{c2}_k$ through the following procedure: $i)$
Compute boundary parameters within the normalized bounds $[0,1]$: $\varepsilon_1 = 1 + \frac{2\xi^{p1}_k}{\xi^{p2}_k-\xi^{p1}_k},$ and     $\varepsilon_2 = 1 + \frac{2(1-\xi^{p2}_k)}{\xi^{p2}_k-\xi^{p1}_k},$ with assuming $\xi^{p1}_k < \xi^{p2}_k$. $ii)$ Then calculate: $\vartheta = 2 - \varepsilon^{-(\eta_c +1)}$. The distribution index $\eta_c$ controls the proximity of offspring to their parents. Based on a random parameter $\mu \sim \mathcal{U}(0,1)$, determine the spread factor $\bar{\varepsilon}$ using polynomial probability distribution as $\bar{\varepsilon}=(\vartheta \mu)^{\frac{1}{(\eta_c+1)}} , \text{if } \mu \leq \frac{1}{\vartheta},$ and otherwise $\bar{\varepsilon}=\left(\frac{1}{2- \vartheta \mu}\right)^{\frac{1}{(\eta_c+1)}}.$
$iii)$ Generate offspring power allocation variables
\begin{align}
&  \xi^{c1}_k = 0.5\left((\xi^{p1}_k+\xi^{p2}_k)-\bar{\varepsilon}|\xi^{p2}_k-\xi^{p1}_k|\right), \\
& \xi^{c2}_k = 0.5\left((\xi^{p1}_k+\xi^{p2}_k)+\bar{\varepsilon}|\xi^{p2}_k-\xi^{p1}_k|\right).
\end{align}
\subsubsection{Parameter-based Mutation}\label{alg:parameter_mutation}
For the normalized power variables $\xi_k \in [0,1]$, polynomial mutation generates a perturbed solution $\xi'_k$ in the neighborhood of the original $\xi_k$ as follows: $i)$ generate random parameter $\mu \sim \mathcal{U}(0,1)$; $ii)$ compute the perturbation factor $\bar{\delta}=\left(2\mu+(1-2\mu)(1-\delta_k)^{\eta_m+1}\right)^{\frac{1}{\eta_m+1}}-1 , \text{if } \mu \leq 0.5$ and $\bar{\delta}=1 - \left(2(1-\mu)+2(\mu-0.5)(1-\delta_k)^{\eta_m+1}\right)^{\frac{1}{\eta_m+1}}, \text{ otherwise.}$ Where $\delta_k = \min[\xi_k, (1-\xi_k)]$ ensures the mutated solution remains within $[0,1]$ and the distribution index $\eta_m$ controls mutation intensity; and 
$iii)$ generate the mutated power allocation
    \begin{equation}
        \xi'_k = \xi_k + \bar{\delta}.
    \end{equation} 
\subsection{Convergence and Computational Complexity for HGA}
To analyze the computational complexity of the proposed HGA solving problem~(\ref{Problem:powercontrol_new}), we examine its hybrid components. Population initialization requires $\mathcal{O}(QK)$ complexity. For each generation, binary crossover operations require $\mathcal{O}(\bar{\varphi}_b K)$, real-valued crossover (SBX) requires $\mathcal{O}(\bar{\varphi}_r K)$, real-valued mutation requires $\mathcal{O}(\bar{\upsilon}_r K)$, and selection requires $\mathcal{O}(Q\log(Q))$. The total computational complexity is $\mathcal{O}(QK + S((\bar{\varphi}_b+\bar{\varphi}_r+\bar{\upsilon}_r)K + Q\log(Q)))$, where $\bar{\varphi}_b$, $\bar{\varphi}_r$, and $\bar{\upsilon}_r$ represent the number of individuals participating in binary crossover, real-number crossover, and real-number mutation per generation, respectively.
For convergence analysis, we must account for both discrete (association variables) and continuous (power control variables) components.}
\textcolor{black}{\begin{theorem}
\label{Theorem:EFHT_HGA}
Consider the HGA solving problem \eqref{Problem:powercontrol_new} with hybrid solution space $\mathcal{Y} = \{0,1\}^{2K} \times [0,1]^K$, where the optimal solution set is $\mathcal{Y}^* \subset \mathcal{Y}$. Let $p_m, \eta_m \in (0, 0.5]$ be the binary mutation probability and the polynomial mutation distribution index, respectively. The expected of EFHT $\tau$ satisfies
\begin{equation}
\mathbb{E}\{\tau\} \geq \tilde{c} (1 - p_m)^{-2K}  (1-\eta_m)^{-K} K^{-1},
\end{equation}
where $\tilde{c} > 0$ is a constant, and $\tau$ is the number of generations until the optimum is found.
\end{theorem}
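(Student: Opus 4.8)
The plan is to model the HGA as an absorbing Markov chain on the hybrid search space $\mathcal{Y} = \{0,1\}^{2K} \times [0,1]^K$ and to reduce the claimed bound to an upper estimate on the maximal one-step probability of entering the optimal set $\mathcal{Y}^*$. Because the elitist survival selection of Section~\ref{alg:BCGA-selection} never discards the incumbent best individual, $\mathcal{Y}^*$ is absorbing and the sequence of generations forms an absorbing chain. If $p_{\max}$ denotes the largest one-step probability of reaching $\mathcal{Y}^*$ from any non-optimal population, then $\tau$ stochastically dominates a geometric random variable with success probability $p_{\max}$, so that $\mathbb{E}\{\tau\} \geq 1/p_{\max}$. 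The entire argument then rests on proving $p_{\max} \leq \tilde{c}^{-1} K (1-p_m)^{2K}(1-\eta_m)^{K}$, after which inversion yields the stated bound with $\tilde{c}$ absorbing the population and selection constants.

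The key structural observation is that every variation operator respects the two-block decomposition of a chromosome: binary crossover and bitwise mutation at rate $p_m$ act only on the $2K$ association genes, while SBX and parameter-based mutation with distribution index $\eta_m$ act only on the $K$ real power variables $\{\xi_k\}$. Consequently the one-step offspring-generation kernel factorizes into a binary kernel and a real kernel, and the probability that a single offspring lands in $\mathcal{Y}^*$ is the product of the probability that its binary block matches the optimal association with the probability that its real block enters the optimal power region. The binary factor is bounded exactly as in the proof of Theorem~\ref{Theorem:EFHT}, contributing $(1-p_m)^{2K}$, and a union bound over the $\mathcal{O}(K)$ offspring of the population (whose size matches the solution size) supplies the linear factor $K$, which becomes $K^{-1}$ upon inversion.

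The main obstacle is the continuous block, since the real kernel admits a density and hitting an exact optimum has probability zero; the optimum must therefore be treated as a positive-measure target and the polynomial-mutation law bounded uniformly over it. I would analyze the perturbation factor $\bar{\delta}$ of Section~\ref{alg:parameter_mutation} by writing its distribution as the push-forward of $\mu \sim \mathcal{U}(0,1)$ through its two branches, and then bound the probability of placing one coordinate into the target interval by the product of the interval length and the maximal density, showing this is controlled by the concentration of the polynomial distribution and scales as $(1-\eta_m)$ per coordinate. Multiplying over the $K$ independent power variables gives the factor $(1-\eta_m)^{K}$, and collecting the binary, continuous, and population contributions yields $p_{\max} \leq C K (1-p_m)^{2K}(1-\eta_m)^{K}$ for an absolute constant $C$, whence $\mathbb{E}\{\tau\} \geq 1/p_{\max}$ delivers the result. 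The delicate point is making the density estimate for $\bar{\delta}$ uniform over admissible parent positions $\delta_k = \min[\xi_k, 1-\xi_k]$, so that the per-coordinate $(1-\eta_m)$ bound holds regardless of where the parents sit in $[0,1]$; this is where the polynomial-mutation formula must be handled with care rather than by analogy to the binary case.
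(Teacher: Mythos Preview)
Your proposal shares the same skeleton as the paper's proof: model the HGA as an absorbing Markov chain, exploit the factorization of the variation kernel into a binary block and a real block, bound the one-step hitting probability from above, and invert via the geometric/Lemma~\ref{lemma2} machinery to obtain the lower bound on $\mathbb{E}\{\tau\}$, with the $K^{-1}$ factor coming from the population size $|\mathcal{S}|=\Theta(K)$.

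The substantive difference is in how the continuous block is handled. The paper does \emph{not} attempt a density analysis of the polynomial mutation operator. Instead it discretizes each power variable $\xi_k$ into $L$ levels, so that the hybrid space becomes finite of size $2^{2K}\times(L+1)^K$, and then \emph{reinterprets} $\eta_m$ as the minimal per-coordinate probability of jumping to a specific discretized level, i.e.\ $\eta_m=\Theta(1/L)$. With this redefinition the real block behaves exactly like an additional $K$ discrete genes, each hitting its optimal level with probability at most $\eta_m$, which immediately yields the $(1-\eta_m)^{K}$ factor by the same argument as for the binary block. Your route keeps $[0,1]^K$ continuous, treats the optimum as a positive-measure target, and tries to bound the per-coordinate hitting probability via $|\text{interval}|\times\sup(\text{density})$ for the push-forward of $\bar{\delta}$. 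That is more faithful to the theorem's statement that $\eta_m$ is the distribution index, but the claim that this product ``scales as $(1-\eta_m)$ per coordinate'' is not justified: the polynomial-mutation density depends on $\eta_m$ through exponents of the form $1/(\eta_m+1)$, and there is no obvious mechanism that produces exactly the factor $(1-\eta_m)$ independently of the (unspecified) target width. You correctly flag the uniformity over $\delta_k=\min[\xi_k,1-\xi_k]$ as delicate, but the more basic gap is tying the density bound to $(1-\eta_m)$ at all.

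In short, your framework is right and matches the paper's, but the paper's discretization-and-relabelling of $\eta_m$ is what makes the continuous factor drop out cleanly; your density route would need an explicit computation linking the polynomial-mutation law to $(1-\eta_m)$, which neither you nor the paper supplies.
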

\begin{proof}
The proof is completed by formulating the procedure
of the HGA as a Markov chain and investigating its features. The detailed proof is available in Appendix~\ref{Appendix:Proof_EFHTofHGA}.
\end{proof}
\begin{remark}
Theorem~\ref{Theorem:EFHT_HGA} reveals that HGA convergence is limited by the slower of two components binary and real values.
\end{remark}}

 \begin{algorithm}
\caption{\textcolor{black}{Hybrid Genetic Algorithm (HGA)}}
\begin{algorithmic} [1] \label{alg:HybridGA}
{\color{black} \REQUIRE Large-scale fading coefficient, bandwidth, number
of subcarriers in an (OFDM) symbol, carrier frequency.
\ENSURE The sub-optimal olution.

\myrule

{$\setminus\setminus$ \texttt{Initialization}}

\STATE Generate initial population $\mathcal{G}_p$ with hybrid individuals.
\STATE Set max generations $S_{MAX}$, initialize $S \leftarrow 1$.
\STATE Set crossover and mutation parameters: $p_c$, $p_m$, $\eta_n$, $\eta_m$.

\WHILE{$S \leq S_{MAX}$}
    \STATE Initialize next generation $\mathcal{G}_{p'} = \varnothing$.
    
    {$\setminus\setminus$ \texttt{Crossover}}
    \FOR{$k_c=1$ to $2\lfloor(p_c+\eta_c)Q/4\rfloor$}
        \STATE Select two parents from $\mathcal{G}_p$.
        \STATE Generate two offspring. Binary vector follows as in~[\ref{bga_crossover}]; meanwhile, the real-valued vector follows as in [\ref{alg:SBX}], add to $\mathcal{G}_c$.
    \ENDFOR
    
    {$\setminus\setminus$ \texttt{Mutation}}
    \FOR{$k_m=1$ to $\lfloor(p_m+\eta_m)Q/2\rfloor$}
        \STATE Select  an individual from from $\mathcal{G}_c$.
        \STATE Generate mutant. Binary vector follows as in~[\ref{bga_mutation}]; meanwhile, the real-valued vector follows as in [\ref{alg:parameter_mutation}], add to $\mathcal{G}_m$.
    \ENDFOR
    
    {$\setminus\setminus$ \texttt{Selection}}
    \STATE Evaluate fitness of each individual in $\mathcal{G}_{p'} = \mathcal{G}_b \cup \mathcal{G}_m$.
    \STATE Select $Q$ best individuals from $\mathcal{G}_p \cup \mathcal{G}_{p'}$.
    \STATE $S \leftarrow S + 1$.
\ENDWHILE
\RETURN Best solution found.}
\end{algorithmic}
\end{algorithm}

\vspace{-5mm}
\section{Numerical Results}
\label{Sec:NumRe}
This section provides numerical results to demonstrate the system's performance and the effectiveness of two proposed algorithms in comparison to state-of-the-art benchmarks.  We simulate a network involving up to $50$ APs and $70$ users. These users are uniformly dispersed within a $15$ square kilometer area, visually represented within a three-dimensional Cartesian coordinate system $(x,y,z)$, as illustrated in Fig.~\ref{fig:NetworkModel}. An LEO satellite, positioned at the coordinate ($300,350,400$)~[km], is equipped with 100 antennas and features an antenna gain of 26.9~[dBi], contrasting with the 10.0~[dBi] gain of the ground devices \cite{bisognin2015millimeter}. The system operates on a bandwidth of 100~[MHz] with a carrier frequency of 20~[GHz], employing a coherence block of 10000 OFDM subcarriers for data transmission. Each data symbol is transmitted at 20 dBW \cite{van2022space}. The noise figures are set at 6~[dB] and 1.3~[dB] for the APs and satellite, respectively. The parameter settings adhere to 3GPP standards outlined in \cite{azari2022evolution}, which consider both large and small fading effects in the propagation channels, e.g., for a rural area as $\beta_{nk} = H_m + H_k - 8.50 - 20\log_{10}(f_c) - 38.63\log_{10}(r_{nk}) + \eta_{nk},$ where $H_m$ and $H_k$ are represented the antenna gains at AP~$n$ and user~$k$, respectively, whereas $f_c$ is the carrier frequency. The distance between this user and AP~$n$ is denoted as $r_{nk}$, and shadow fading, $\eta_{nk}$, is modelled as a random variation following a log-normal distribution with a standard deviation of 7~[dB]. The large-scale fading coefficient between user~$k$ and the satellite is determined using one of the models proposed in\cite{giordani2020non} as
\begin{equation}\label{eq:B_k_large}
   \beta_k = H + H_k + \tilde{H}_k - 32.45 - 20\log_{10}(f_cr_k)+ \eta_k,
\end{equation}
where $H$ is the receiver antenna gain at the satellite, and its normalized beam pattern is \cite{bisognin2015millimeter}, as 
\begin{equation}
    \tilde{H}_k = \begin{cases}
4 \left| J_1 \left(\frac{2\pi}{\lambda} \alpha \sin(\upphi_k) \right) / \left(\frac{2\pi}{\lambda} \alpha \sin(\upphi_k) \right) \right|^2, &\text{if } 0 \le \upphi_k \le \frac{\pi}{2}, \\
0, & \text{if } \upphi_k = 0,
\end{cases}
\end{equation}
where $\alpha$ denotes the radius of the antenna’s circular aperture; $\lambda$ is the wavelength; and $\upphi_k$ is the angle between user~$k$ and its beam center. In \eqref{eq:B_k_large}, the shadow fading $\eta_k$ is determined by a log-normal
distribution with the standard derivation depending on the carrier frequency, channel condition, and the elevation angle \cite{giordani2020non}. Besides, $r_k$~[m] represents the distance between the satellite and user~$k$, defined as
$z_k = \sqrt{R^2\sin^2(\theta_k) + z_0^2 + 2z_0R - R\sin(\upphi_k)},$ where $R$ is the Earth’s radius and $z_0$ is the satellite altitude. The system model and proposed algorithms were evaluated through a MATLAB environment installed in a personal desktop with an Intel(R) Core(TM) i7-12700 2.80~[GHz] and 64~[GB] RAM. The following benchmarks compare the performances of the proposed algorithms (Exhaustive search in Algorithm \ref{Alg1} and Binary-coded genetic algorithm in Algorithm \ref{Alg:Integer-coded}) to the full connected model in \cite{van2022space}, another evolutionary algorithms called differential evolutionary (DE) built in \cite{van2023phase} and real-coded genetic algorithm (RCGA) built-in \cite{deb2000efficient}.


\begin{figure}[t]
    \centering
    \subfloat[Arithmetic Mean Throughput]{\includegraphics[width=0.22\textwidth]{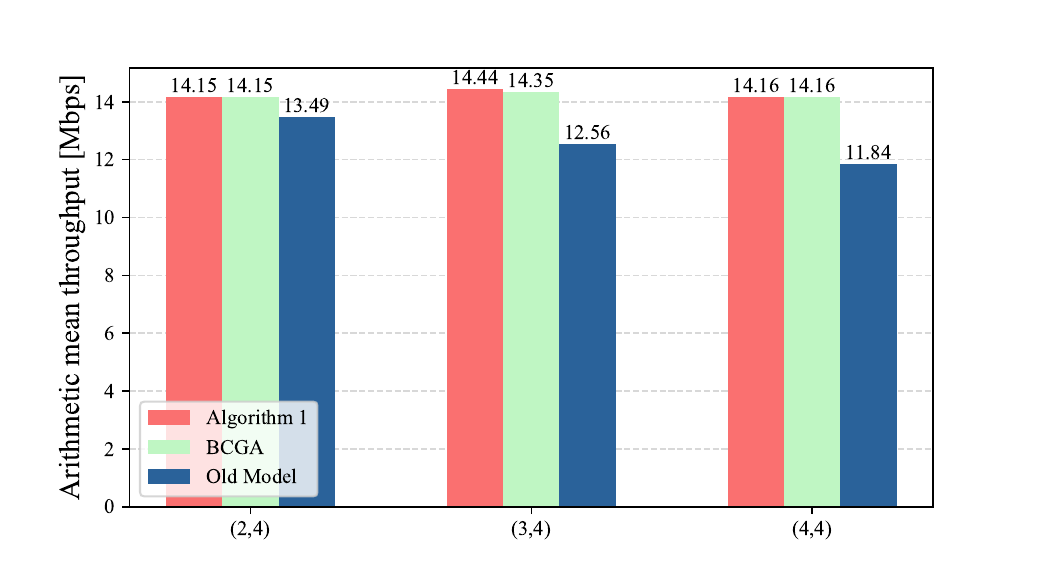}}
    \subfloat[Geometric Mean Throughput]
    {\includegraphics[width=0.22\textwidth]{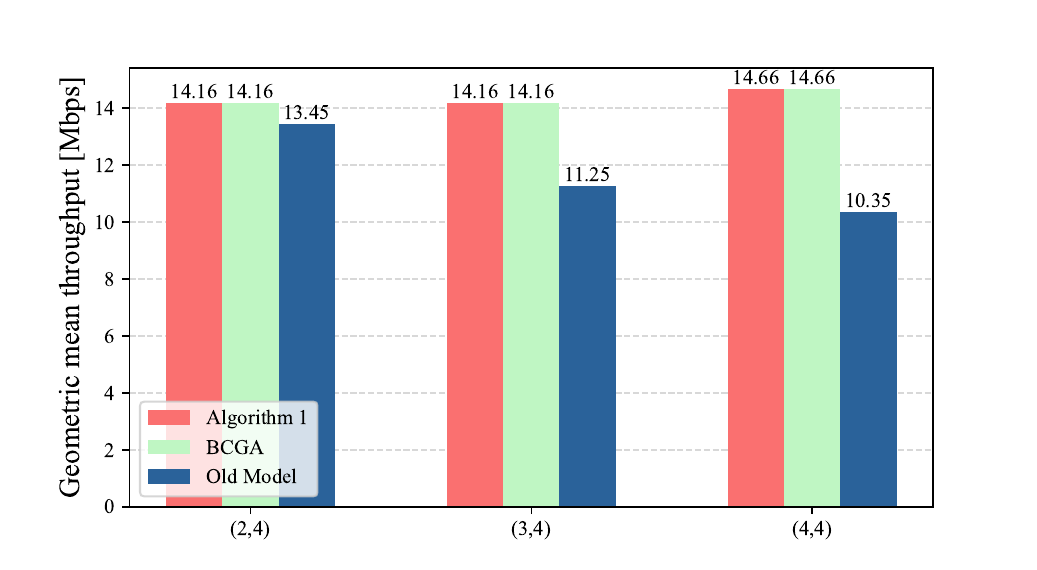}}\\[1ex]
    \subfloat[Max-min Fairness Throughput]{\includegraphics[width=0.22\textwidth]{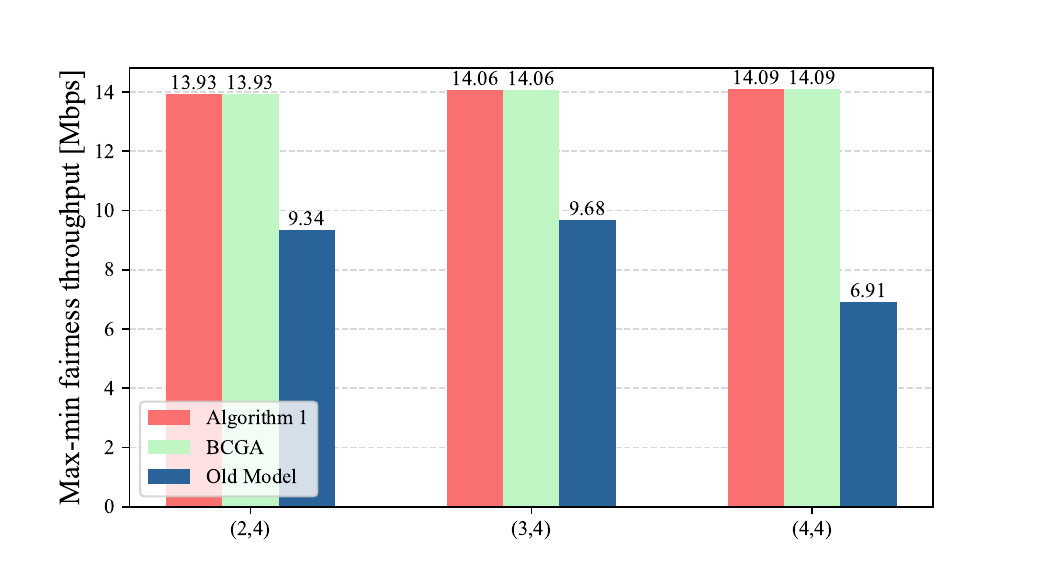}}
    \caption{The optimal results of the proposed algorithms compared the fully connected model.}\vspace{-5mm}
    \label{fig:small_scale}
\end{figure}



\begin{figure*}[!t]
    \centering
    \subfloat[]{\includegraphics[width=0.332\textwidth]{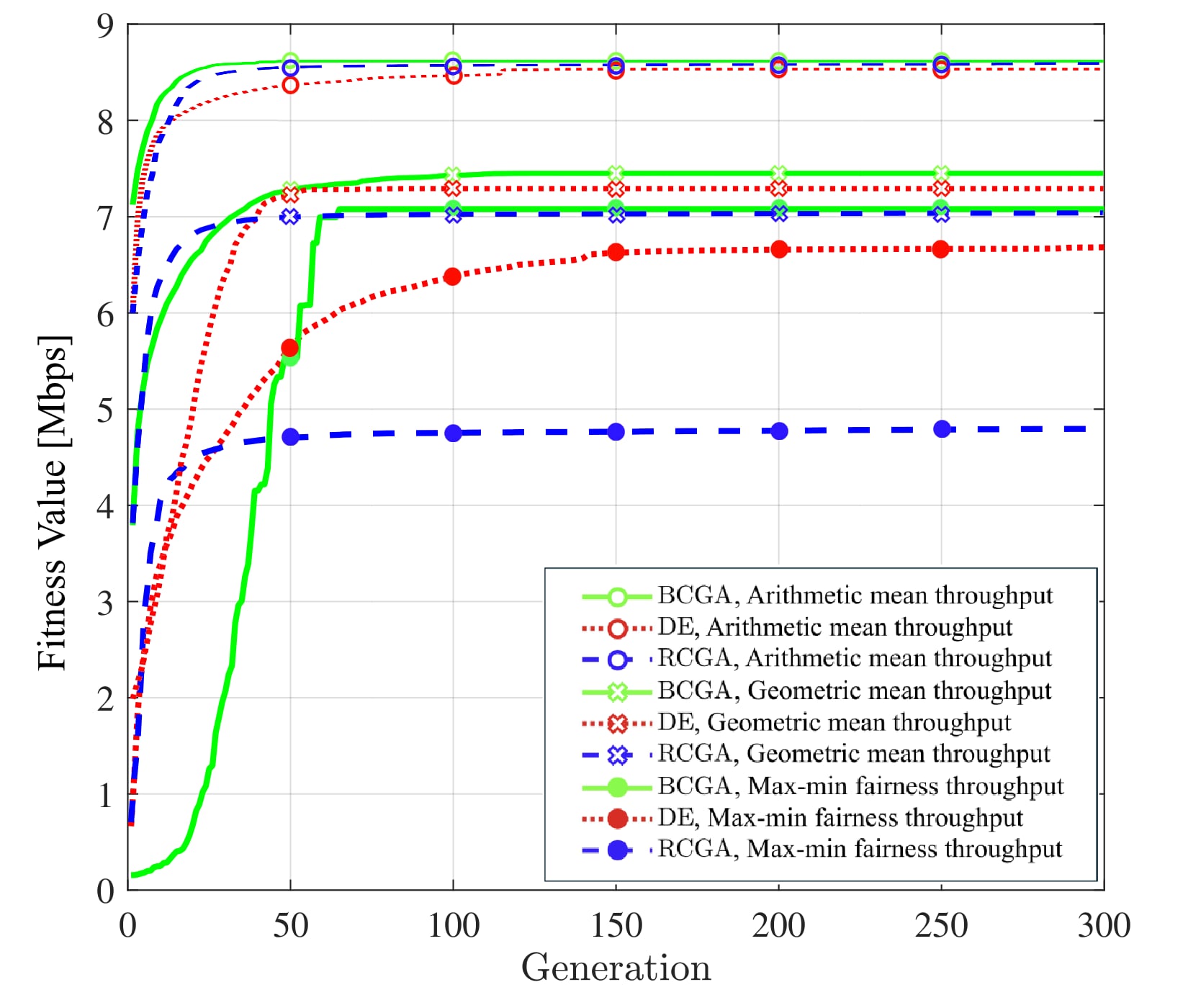}}
    \subfloat[]{\includegraphics[width=0.339\textwidth]{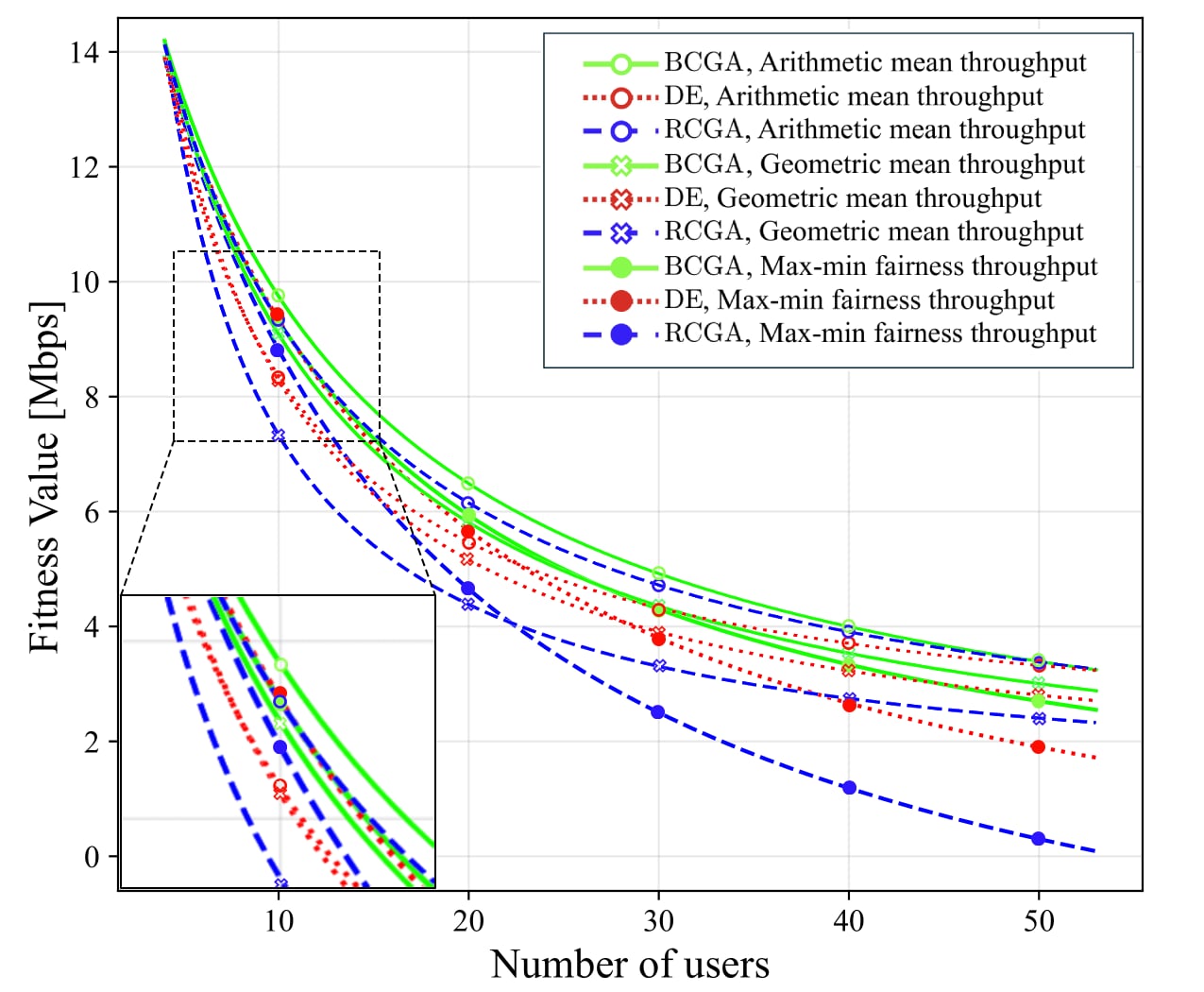}}
    \subfloat[]{\includegraphics[width=0.305\textwidth]{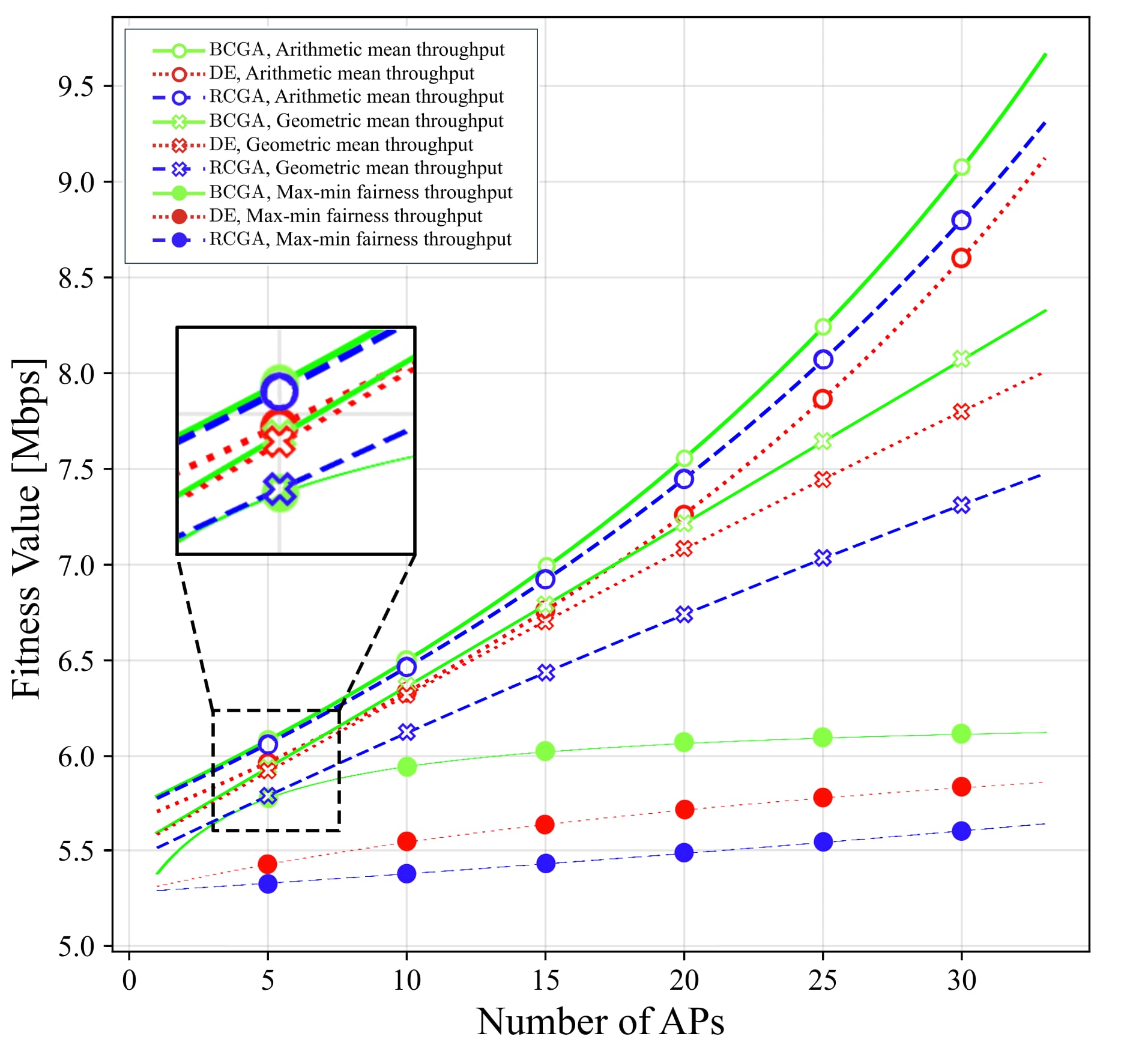}}
    \caption{The system performances of different evolutionary algorithms within model in Problem (19): (a) The convergence rate as the number of generations increases; (b) The evaluation of the gap between the optimal values of different algorithms as the test sets vary; (c) The evaluation of the gap between the optimal values of different algorithms as the test sets vary.}\vspace{-5mm}
    \label{fig:Problem19}
\end{figure*}

In a small-scale network with four users and the numbers of APs is in \{2, 3, 4\} together with the values of fitness functions under different benchmarks are presented in Fig.~\ref{fig:small_scale}. The results demonstrate that Algorithm~\ref{Alg:Integer-coded} yields solutions that are nearly identical to the globally optimal solution obtained through an exhaustive search algorithm \ref{Alg1}. Additionally, we compare the performance of our proposed algorithms with the full association approach. When considering the Arithmetic mean throughput, our proposed algorithms consistently outperform the full association, with fitness values surpassing it by 10\% to 20\%. The superiority of our proposed algorithms becomes even more evident when evaluating the Geometric mean throughput and the Max-min fairness throughput, where improvements range from 30\% to an impressive 50\%.

To observe the convergence of the proposed algorithm (BCGA) and the impact of the problem size, we compare its performance with DE and RCGA. Firstly, within a communication system of $15$ users and $15$ APs, the performance of all these considered benchmarks is evaluated in terms of the number of generations, as displayed in Fig~4a. Across all three throughput metrics, Arithmetic mean throughput, Geometric mean throughput, and Max-min fairness throughput, the BCGA algorithm consistently outperforms both DE and RCGA. In the first utility, arithmetic mean, BCGA converges to a higher value compared to DE and RCGA, showcasing its superiority in maximizing the overall system throughput. In detail, the optimal result provided by BCGA is slightly better than that by DE and RCGA. Meanwhile, for the remaining utilities, geometric mean and max-min fairness, BCGA excels over another counterpart by reaching notable optimal values. These results highlight the potential of BCGA in effectively addressing various utility functions and its robustness in delivering improved network performance compared to other evolutionary algorithms such as DE and RCGA.
Furthermore, while keeping the number of APs at 5 and gradually increasing the number of users, we observe a decreasing trend in the throughput of the objective functions, as shown in Fig.~4b, clearly demonstrating the impact of user density on system performance. It is explained that when the number of users is sufficient for the number of APs, the system can still provide sufficiently good throughput for each objective function; however, as the number of users becomes too large, the throughput of each objective function gradually decreases. This trend is attributed to the fixed serving resources of the system. To investigate this phenomenon further, we analyze the results of the three algorithms for each utility. With the Arithmetic Mean, there is no significant difference among the three algorithms when the number of users exceeds 40. For the Geometric Mean and Max-min fairness utilities, even when the number of users surpasses 50, there remains a small discrepancy between the algorithms. Nevertheless, BCGA consistently delivers the best results across all scenarios. On the contrary, a monotonically increasing trend is observed by fixing the number of users at 20 and continuously increasing the number of APs. Regarding the rate of improvement, for the Arithmetic Mean and Geometric Mean, the optimal value achieved by the proposed algorithms exhibits a significant enhancement compared to smaller values of $N$. Conversely, for the Max-min fairness objective, the optimal value delivered by the algorithms is exhibited as a slight climb. The monotonically increasing trend is a direct consequence of the independent nature of the connections between APs and users. Adding more APs can improve the data throughput of each individual user.

\begin{table}[!t]
    \centering
    \resizebox{0.49\textwidth}{!}{
    \begin{tabular}{lccc}
\hline
\cline{1-4}\rule{0pt}{2.5ex} &\textbf{Arithmetic Utility} & \textbf{Geometric Utility} & \textbf{Max-min Utility} \\ \hline
\rule{0pt}{2.5ex} \textbf{Total Throughput}     & \textbf{231.145}  & 225.927 &   225.365 \\
\rule{0pt}{2.5ex} \textbf{Minimum Throughput}   & 0.0085            & 5.1648  & \textbf{6.3699}\\ \hline \cline{1-4}
\end{tabular}
}
    \caption{The total throughput and minimum throughput in optimized utility systems.}
    \label{tab:total and minimal throughput}
\end{table} 

\begin{table}[!t]
\resizebox{0.48\textwidth}{!}{
\begin{tabular}{lccc}
\hline\cline{1-4}
\rule{0pt}{2.5ex} &
  {\textbf{Arithmetic Utility}} &
  {\textbf{Geometric Utility}} &
  {\textbf{Max-min Utility}} \\ \hline
\rule{0pt}{2.5ex} \textbf{Satellite only} & {56.80}   & {68.80}     & \textbf{74.40}     \\
\rule{0pt}{2.5ex} \textbf{APs only}       & \textbf{40.40}   & {25.20} & {22.40} \\
\rule{0pt}{2.5ex} \textbf{Both of Satellite and APs}  &   {1.20} &   \textbf{6.00} &   {3.20} \\ \hline\cline{1-4}
\end{tabular}
}
\caption{The proportion (\%) of user connections between satellite and APs for different optimization utilities.}\vspace{-6mm}
\label{tab:percent-connnecting}
\end{table}

On the other hand, to evaluate the effectiveness of the benchmarks, Table~\ref{tab:total and minimal throughput} presents the total throughput and minimum throughput for the different fitness functions using through tests problem set of 70 users and 50 APs. It compares the system-level throughput and user-level fairness across different optimization approaches. The total throughput indicates the overall performance of the system. The minimum throughput represents the throughput experienced by the worst-off user, which reflects fairness. In the Arithmetic Throughput utility, the overall system performance reaches its peak with a total throughput of 231.145 [Mbps]. However, some users are neglected, with throughput as low as 0.0085 [Mbps]. Conversely, in the Max-min fairness utility, the performance of each individual user improves significantly, even though the overall system performance is not as high, with a total throughput of 225.365 [Mbps] and a minimum throughput of 6.3699 [Mbps].
The geometric mean throughput utility offers a balanced trade-off between these two utilities, as previously mentioned. It achieves a total throughput of 225.927 [Mbps], which is slightly lower than the arithmetic throughput utility but higher than the max-min fairness utility. Additionally, the minimum throughput in the Geometric Mean Throughput utility is 5.1648 [Mbps], which is a significant improvement compared to the Arithmetic Throughput utility and only slightly lower than the Max-min fairness utility. Therefore, the Geometric Mean Throughput utility balances maximizing total system performance and ensuring fairness among users. This makes it a proper optimization target that considers the demands of all users.

The roles of the space and ground links in enhancing the uplink spectral efficiency are captured by optimizing the connection of the satellite, APs, and users. We compared the differences between the ergodic throughput of three distinct network architectures: spacial satellite-only (Mode S), terrestrial APs only (Mode A), and our proposed cell-free-terrestrial integrated model in Fig.~\ref{fig:compare_model}. The box plots demonstrate that our proposed model achieves superior performance metrics compared to both baseline approaches. Using only terrestrial APs might reduce throughput in areas with high user density. Similarly, satellite-only deployment yields reduced user throughput, albeit with minimal variance observed across experimental trials. This indicates a hybrid approach, providing users with the benefits of the joint connectivity where both APs and satellite connections are optimized for different types of traffic or service levels. 
Moreover, our analysis of the model optimized by Algorithm~\ref{Alg:Integer-coded} reveals preferred connections to either satellite or APs, with priority given to the satellite due to the LoS links.
Table~\ref{tab:percent-connnecting} shows the percentage of users served only from either the single link or from both. Furthermore, the preference for satellite connectivity is most pronounced in the max-min fairness that optimizes the connection quality for the weakest users. Specifically, this objective shows the highest percentage of users up to 74.40\% served only by the satellite, suggesting a prioritization of ensuring that each user achieves at least a minimum level of service quality, even if it might reduce the overall system performance. This observation highlights reliance on satellite connections, which may provide more consistent coverage and bandwidth compared to terrestrial networks. Integrating a satellite enhances coverage, especially in areas where terrestrial APs struggle. Without the satellite, terrestrial networks would offer limited coverage. Conversely, the sum throughput utility shows a percentage of users (56.80\%) served by the satellite only. This objective distributes resources more evenly across all the users, balancing the load between the satellite and APs. In all the realizations, there are 40.40\% served by the only APs. This approach can lead to a compromise in individual performance to enhance the total throughput, making it suitable for scenarios where the overall network performance is critical. The geometric throughput, which balances between the total network and per-user throughput, shows intermediate values in both categories, with an enhanced percentage of users about 6.00\% connecting through both the satellite and APs.

\textcolor{black}{We benchmark the system performance with and without the incorporation of power efficiency constraints across three utility optimization models in Fig.~\ref{fig:compare_model_withenergy}. The numerical evaluations consistently reveal that integrating power efficiency constraints yields a system throughput gain ranging from 10\% to 20\%. These findings confirm that although this enhancement entails a moderate increase in computational complexity, it leads to a notably more energy-efficient and high-performing network design.}

\begin{figure}[t]
    \centering
    \includegraphics[width=.49\textwidth]{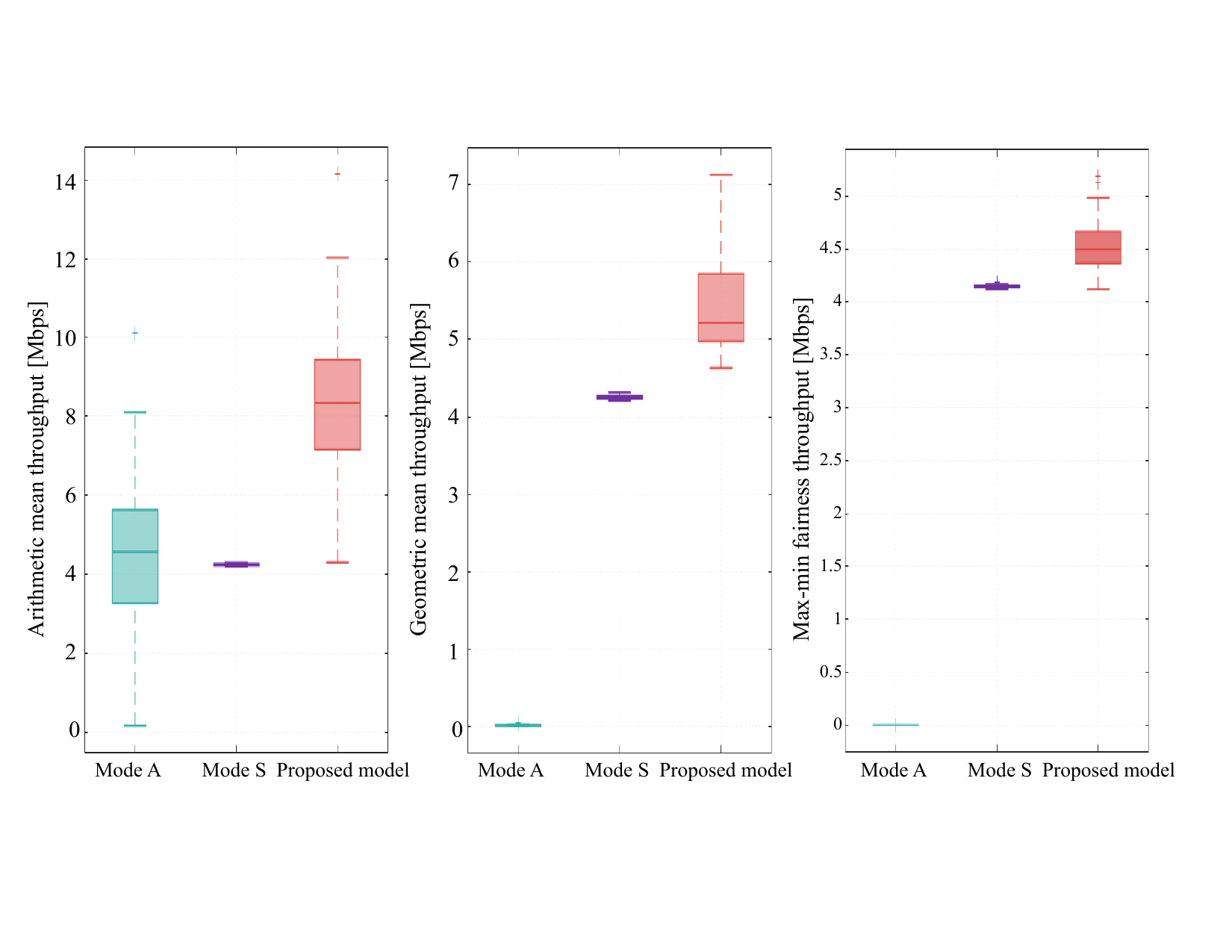}
    \caption{The comparison of ergodic throughput across different network architectures.}\vspace{-5mm}
    \label{fig:compare_model}
\end{figure}

\begin{figure}
    \centering
    \includegraphics[width=0.85\linewidth]{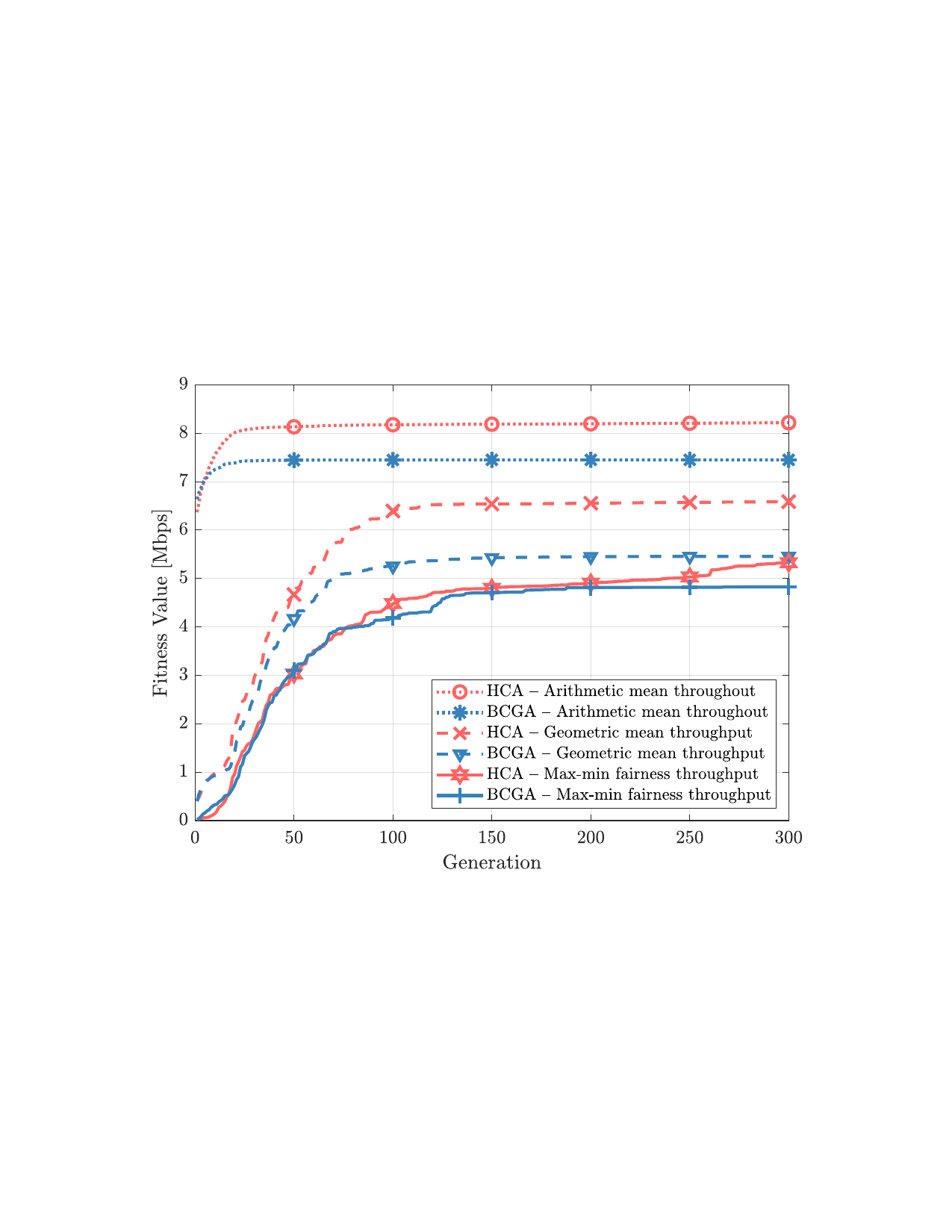}
    \caption{\color{black} The system performance comparison between the fixed and optimized power allocation systems based on the optimal results obtained by BCGA and HGA.}
    \label{fig:compare_model_withenergy}
\end{figure}
\section{Conclusion} \label{Sec:Con}

This paper has investigated the applications of adapted GA to solve the user association in integrated space-ground cell-free Massive MIMO systems, where a load balancing approach is utilized to enhance spectral efficiency. Moreover, we formulated and solved the APs and satellite cooperation problem in an effective manner by adjusting the Binary-coded genetic algorithm. Numerical results demonstrated that the proposed algorithm BCGA yields the fairness level closed to the global optimum for small-scale networks. Besides, it provides a superior solution compared with the full association benchmarks for large-scale networks. We also highlighted the critical roles of the satellite and APs in improving the  throughput of each user. \textcolor{black}{Furthermore, the inclusion of power control strategies into the proposed framework enables more flexible resource allocation, which significantly boosts system performance under varying user demands and channel conditions.}

\vspace{-5mm}
\appendix
\subsection{Proof of Theorem~\ref{Theorem:SE}} \label{Appendix:SE}
The closed-form expression of the desired signal gain is 
\begin{equation} 
\begin{split}
& \mathbb{E}\{o_{kk} \}  = \tilde{\alpha}_{k}  \mathbb{E} \{ \hat{\mathbf{h}}_{k}^H \mathbf{h}_{k} \} + \sum\nolimits_{n=1}^N \tilde{\alpha}_{k} \mathbb{E}\{\hat{g}_{nk}^\ast g_{nk} \} \\
&  \stackrel{(a)}{=} \tilde{\alpha}_{k}  \mathbb{E} \{  \| \hat{\mathbf{h}}_{k} \|^2 \} + \sum\nolimits_{n=1}^N \tilde{\alpha}_{k} \mathbb{E}\{|\hat{g}_{nk}|^2 \} \\
& \stackrel{(b)}{=} \tilde{\alpha}_k \| \overline{\mathbf{h}}_k \|^2 + pK \tilde{\alpha}_k \mathrm{tr}(\mathbf{R}_k \pmb{\Psi}_k \mathbf{R}_k) + \alpha_k\sum\nolimits_{n=1}^N \varrho_{nk},
\end{split}
\end{equation}
where $(a)$ is obtained by the independence of the channel estimate and its estimation error from user~$k$ to the satellite and APs. Meanwhile, $(b)$ is obtained by the channel statistics in Lemma~\ref{lemma:ChannelEst}. Consequently, the numerator of \eqref{eq:SINRk} is
\begin{multline}
p_k |\mathbb{E}\{o_{kk} \}|^2 = \\
p_k \left| \tilde{\alpha}_k \| \overline{\mathbf{h}}_k \|^2 + pK \tilde{\alpha}_k \mathrm{tr}(\mathbf{R}_k \pmb{\Psi}_k \mathbf{R}_k) + \alpha_k\sum\nolimits_{n=1}^N \varrho_{nk}\right|^2.
\end{multline}
The first term in the denominator of \eqref{eq:SINRk} is reformulated as
\begin{equation} \label{eq:SumT1}
\sum\nolimits_{k'=1}^K p_{k'} \mathbb{E}\{ |o_{kk'}|^2 \} = \sum\nolimits_{k'=1, k' \neq k}^K p_{k'} \mathbb{E}\{ |o_{kk'}|^2 \} +   p_{k} \mathbb{E}\{ |o_{kk}|^2 \}.
\end{equation}
Each expectation of the mutual interference in \eqref{eq:SumT1} is computed for $k \neq k'$ as follows
\begin{equation} \label{eq:Expokkprime}
\begin{split}
& \mathbb{E}\{ |o_{kk'}|^2 \} = \tilde{\alpha}_{k'} \mathbb{E} \{ | \hat{\mathbf{h}}_k^{H} \mathbf{h}_{k'} |^2 \} +  \tilde{\alpha}_{k'} \sum\nolimits_{n=1}^N \mathbb{E} \{ |\hat{g}_{nk} g_{nk'}|^2 \} =\\
&  \tilde{\alpha}_{k'} \mathrm{tr} (\mathbf{R}_{k'} \mathbf{R}_{k} \pmb{\Psi}_k \mathbf{R}_{k}) + \tilde{\alpha}_{k'}  \bar{\mathbf{h}}_{k'}^H \mathbf{R}_{k} \pmb{\Psi}_k \mathbf{R}_{k} \bar{\mathbf{h}}_{k'}  + \tilde{\alpha}_{k'} \alpha_k \sum\nolimits_{n=1}^N \varrho_{nk} \beta_{nk'},
\end{split}
\end{equation}
where the first expectation on the right-hand side of \eqref{eq:Expokkprime} is computed in closed form by decomposing the space link into the LoS and NLoS components. The second expectation on the right-hand side of \eqref{eq:Expokkprime} is obtained by the independence of the channel estimate and the estimation error. For $k' = k$, the second expectation on the right-hand side of \eqref{eq:SumT1} is driven as
 
\begin{align}
\label{eq:Eokk}
&\mathbb{E}\{ |o_{kk}|^2 \} = \mathbb{E} \left\{ \left| \| \hat{\mathbf{h}}_k \|^2 + \hat{\mathbf{h}}_k^H \mathbf{e}_k + \sum_{n=1}^N |\hat{g}_{nk}|^2 + \sum_{n=1}^N \hat{g}_{nk}^\ast e_{nk}  \right|^2 \right\} \nonumber\\
&=  \mathbb{E} \{\| \hat{\mathbf{h}}_k \|^4 \} +  \mathbb{E}\{ |\hat{\mathbf{h}}_k^H \mathbf{e}_k|^2  \} +  \mathbb{E} \left\{ \left| \sum\nolimits_{n=1}^N |\hat{g}_{nk}|^2  \right|^2 \right\}  \nonumber\\
& + \mathbb{E} \left\{ \left| \sum\nolimits_{n=1}^N \hat{g}_{nk}^\ast e_{nk}  \right|^2  \right\} + 2 \mathbb{E} \left\{ \| \hat{\mathbf{h}}_k \|^2   \sum\nolimits_{n=1}^N |\hat{g}_{nk}|^2  \right\},
\end{align}
where the remaining expectations in  the last equation of \eqref{eq:Eokk} are disappeared since the zero mean of the additive noise. The first expectation in the last equation of \eqref{eq:Eokk} is computed as
\begin{multline} \label{eq:hatk4}
\mathbb{E} \{\| \hat{\mathbf{h}}_k \|^4 \} = \left( \tilde{\alpha}_k \| \bar{\mathbf{h}}_k  \|^2 + 2pK \tilde{\alpha}_k\mathrm{tr}\left( \mathbf{R}_k \pmb{\Psi}_k \mathbf{R}_k \right) \right)^2 +  \\
 2 pK \tilde{\alpha}_k \bar{\mathbf{h}}_k^H \mathbf{R}_k \pmb{\Psi}_k \mathbf{R}_k \bar{\mathbf{h}}_k  + p^2 K^2 \tilde{\alpha}_k \mathrm{tr}\left( \mathbf{R}_k \pmb{\Psi}_k \mathbf{R}_k  \mathbf{R}_k \pmb{\Psi}_k \mathbf{R}_k \right), 
\end{multline}
which is obtained based on the channel estimate in \eqref{eq:hatk} and the Boolean property $\tilde{\alpha}_k^2 = \tilde{\alpha}_k$ .The second expectation in the last equation of \eqref{eq:Eokk} is computed as
\begin{multline}
\mathbb{E}\{ |\hat{\mathbf{h}}_k^H  \mathbf{e}_k |^2 \} = \tilde{\alpha}_k \bar{\mathbf{h}}_k^H \mathbf{R}_k \bar{\mathbf{h}}_k - p K \tilde{\alpha}_k\bar{\mathbf{h}}_k^H \mathbf{R}_k \pmb{\Psi}_k  \mathbf{R}_k \bar{\mathbf{h}}_k + \\
pK \tilde{\alpha}_k \mathrm{tr} \left( \mathbf{R}_k \mathbf{R}_k \pmb{\Psi}_k \mathbf{R}_k  \right) - p^2K^2 \tilde{\alpha}_k \mathrm{tr} \left( \mathbf{R}_k \pmb{\Psi}_k \mathbf{R}_k  \mathbf{R}_k \pmb{\Psi}_k \mathbf{R}_k  \right),
\end{multline}
where the channel estimate is given in \eqref{eq:hatk} and  the channel estimation error together with its statistics are defined in Lemma~\ref{lemma:ChannelEst}. The third expectation in the last equation of \eqref{eq:Eokk} is computed as
\begin{equation}
\mathbb{E} \left\{ \left| \sum\nolimits_{n=1}^N |\hat{g}_{nk}|^2  \right|^2 \right\} = \sum\nolimits_{n=1}^N \varrho_{nk}^2  + \left( \sum\nolimits_{n=1}^N \varrho_{nk}  \right)^2,
\end{equation}
where the channel estimate $\hat{g}_{nk}$ is defined in \eqref{eq:hatgnk} and its variance given in \eqref{eq:varrhonk}. The fourth expectation in the last equation of \eqref{eq:Eokk} is computed as
\begin{equation}
\mathbb{E} \left\{ \left| \sum\nolimits_{n=1}^N \hat{g}_{nk}^\ast e_{nk}  \right|^2  \right\} = \sum\nolimits_{n=1}^N \varrho_{nk} (\beta_{nk} - \varrho_{nk}),
\end{equation}
which is based on the independence of $\hat{g}_{nk}$ and $e_{nk}, \forall n,k$. The last expectation in the last equation of \eqref{eq:Eokk} is computed as
\begin{multline} \label{eq:hatgk}
\mathbb{E} \left\{ \| \hat{\mathbf{h}}_k \|^2   \sum\nolimits_{n=1}^N |\hat{g}_{nk}|^2  \right\} = \\
\left( pK \tilde{\alpha}_k \mathrm{tr}(\mathbf{R}_k \pmb{\Psi}_k \mathbf{R}_k ) + \tilde{\alpha}_k \| \bar{\mathbf{h}}_k \|^2  \right) \sum\nolimits_{n=1}^N \varrho_{nk},
\end{multline}
thanks to the independent of  the space and ground channels. Substituting, \eqref{eq:hatk4}-\eqref{eq:hatgk} into \eqref{eq:Eokk}, obtaining the closed-form expression
\begin{multline}
\mathbb{E}\{ |o_{kk}|^2 \} = \left(  \tilde{\alpha}_k \|\bar{\mathbf{h}}_k \|^2 +  pK \tilde{\alpha}_k\mathrm{tr}(\mathbf{R}_k \pmb{\Psi}_k \mathbf{R}_k) \right)^2 + \\
2 pK \tilde{\alpha}_k \bar{\mathbf{h}}_k^H \pmb{\Psi}_k \bar{\mathbf{h}}_k + p^2 K^2 \tilde{\alpha}_k  k\mathrm{tr}(\mathbf{R}_k \pmb{\Psi}_k \mathbf{R}_k \mathbf{R}_k \pmb{\Psi}_k \mathbf{R}_k),
\end{multline}
after some algebra. The other expectations are computed in a similar manner and we obtain the result as  in the theorem.
\vspace{-5mm}
\subsection{Useful Definition and Lemmas of Theorem \ref{Theorem:EFHT}}
\begin{definition}[\cite{he2001drift},\cite{he2003towards}]\label{Def:1}
{In the initial stage, we shall introduce essential notations and definitions used for the proof. Let us consider a population space denoted by $\mathcal{X}$, and  $\mathcal{X}^\ast$ represent a collection of all the optimal solutions. We define a Markov chain as a sequence $\{{\zeta_t}\}^{\infty}_{t=0}$, where each $\zeta_t$ is a state at time instance $t$ ($t= 0,1,\ldots$). The notation $\mu_t$ denotes the probability of $\zeta_t$ belonging to $\mathcal{X}^\ast$, which is
\begin{equation}
\mu_t = \sum\nolimits_{x \in X^{*}} \mathrm{Pr}(\zeta_t = x),
\end{equation}
where $\mathrm{Pr}(\cdot)$ is the probability of an event. We note that $\{{\zeta_t}\}^{\infty}_{t=0}$ is said to converge to $\mathcal{X}^{\ast}$ if $\lim_{t \to \infty} \mu_t = 1$ and the convergence rate is measured by $1 - \mu_t$ at time instance $t$. In this paper, absorbing the Markov chain will be utilized to model the procedure that obtain a solution to problem~\eqref{Problem:MaxMinQoS} due to its desirable theoretical attributes and their feasibility in practical applications. We recall that $\{{\zeta_t}\}^{\infty}_{t=0}$ is an absorbing chain if
  $\forall t \in \{0,1 ,\ldots\}:  \mathrm{Pr}(\zeta_{t+1} \notin \mathcal{X}^{\ast} \mid \zeta_t \in \mathcal{X}^{\ast}) = 0.$
Let's introduce a random variable $\tau$  representing the events
\begin{multline}
\tau = 0: \zeta_0 \in \mathcal{X}^{\ast} \mbox{ and } \tau = t: \zeta_t \in \mathcal{X}^{\ast} \wedge \zeta_i \notin \mathcal{X}^{\ast} \\ (\forall i \in \{0, 1, \ldots, t-1\}), \forall t \geq 1. 
\end{multline}
The expectation $\mathbb{E}\{\tau\}$, is called the expected first hitting time of the above Markov chain.} 
\end{definition}

\begin{lemma}[\cite{yu2008new}] \label{lemma2}
{Given an absorbing Markov chain $\{{\zeta_t}\}^{\infty}_{t=0}$ with $\zeta_t \in \mathcal{X}$ and a target subspace $\mathcal{X}^{\ast} \subset \mathcal{X}$, if two sequences $\{A_t\}_{t = 0}^{\infty}$ and $\{B_t\}_{t = 0}^{\infty}$ satisfy
\begin{align}
&\prod\nolimits_{t = 0}^{\infty}(1-A_t) = 0, \\
&  B_t \geq \sum\nolimits_{x \notin \mathcal{X}^\ast} P(\zeta_{t+1} \in \mathcal{X}^{\ast} \mid \zeta_t = x)\frac{P(\zeta_t = x)}{1-\mu_t} \geq A_t, \label{eq:beta}
\end{align}
then the Markov chain converges to $\mathcal{X}^{\ast}$ and the convergence rate $(1 - \mu_t)$ is bounded by
\begin{equation}
    (1 - \mu_0)\prod\nolimits_{i=0}^{t-1}(1-A_i) \geq 1 - \mu_t \geq (1-\mu_0)\prod\nolimits_{i=0}^{t-1}(1-B_i).
\end{equation}}
\end{lemma}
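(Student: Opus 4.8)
The plan is to reduce the whole statement to one exact multiplicative recurrence for the non-convergence probability $1-\mu_t$, and then to sandwich that recurrence between the two infinite products. First I would name the middle quantity in \eqref{eq:beta}, writing
\[
r_t = \sum\nolimits_{x \notin \mathcal{X}^\ast} P(\zeta_{t+1} \in \mathcal{X}^\ast \mid \zeta_t = x)\frac{P(\zeta_t = x)}{1-\mu_t},
\]
so that the hypothesis becomes simply $A_t \le r_t \le B_t$. Before proceeding I would record that $0 \le r_t \le 1$: nonnegativity is immediate, while the upper bound follows from $P(\zeta_{t+1}\in\mathcal{X}^\ast\mid\zeta_t=x)\le 1$ together with $\sum_{x\notin\mathcal{X}^\ast}P(\zeta_t=x)=1-\mu_t$. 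This guarantees each factor $1-r_t$ lies in $[0,1]$, which is exactly what makes the product comparisons below direction-preserving.

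The core step is the recurrence. Conditioning $\mu_{t+1}=P(\zeta_{t+1}\in\mathcal{X}^\ast)$ on $\zeta_t$ by the law of total probability and splitting the state space into $\mathcal{X}^\ast$ and its complement gives
\[
\mu_{t+1} = \sum\nolimits_{x\in\mathcal{X}^\ast}P(\zeta_{t+1}\in\mathcal{X}^\ast\mid\zeta_t=x)P(\zeta_t=x) + \sum\nolimits_{x\notin\mathcal{X}^\ast}P(\zeta_{t+1}\in\mathcal{X}^\ast\mid\zeta_t=x)P(\zeta_t=x).
\]
Here the absorbing property from Definition~\ref{Def:1} is precisely what is needed: for $x\in\mathcal{X}^\ast$ the one-step transition probability into $\mathcal{X}^\ast$ equals one, so the first sum collapses to $\sum_{x\in\mathcal{X}^\ast}P(\zeta_t=x)=\mu_t$, while the second sum equals $r_t(1-\mu_t)$ by the definition of $r_t$. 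Hence $\mu_{t+1}=\mu_t+r_t(1-\mu_t)$, or equivalently
\[
1-\mu_{t+1}=(1-\mu_t)(1-r_t).
\]

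Iterating this exact identity from $t=0$ yields $1-\mu_t=(1-\mu_0)\prod_{i=0}^{t-1}(1-r_i)$. Because every factor lies in $[0,1]$ and $1-A_i\ge 1-r_i\ge 1-B_i$, replacing each $1-r_i$ by $1-A_i$ (respectively $1-B_i$) enlarges (respectively shrinks) the product, which delivers the two-sided bound
\[
(1-\mu_0)\prod\nolimits_{i=0}^{t-1}(1-A_i) \ge 1-\mu_t \ge (1-\mu_0)\prod\nolimits_{i=0}^{t-1}(1-B_i)
\]
claimed in the lemma. Finally, letting $t\to\infty$ in the upper bound and invoking the hypothesis $\prod_{t=0}^{\infty}(1-A_t)=0$ forces $1-\mu_t\to 0$, i.e. $\mu_t\to 1$, which is exactly convergence to $\mathcal{X}^\ast$. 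I expect the only delicate point to be the recurrence derivation: one must invoke the absorbing property at precisely the right place to collapse the $\mathcal{X}^\ast$-sum to $\mu_t$, and separately confirm $r_t\le 1$ so that the monotone product comparison is valid; the remaining manipulations are routine.
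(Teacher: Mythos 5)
Your proof is correct, but there is no in-paper argument to compare it against: the paper imports this lemma verbatim from \cite{yu2008new} as a black-box tool (alongside Definition~1 and Lemma~3) and only \emph{uses} it inside the proof of Theorem~2. What you have done is reconstruct the standard argument underlying the cited result, and the reconstruction is sound: the law of total probability plus the absorbing property ($P(\zeta_{t+1}\in\mathcal{X}^{\ast}\mid\zeta_t=x)=1$ for $x\in\mathcal{X}^{\ast}$) yields the exact one-step identity $1-\mu_{t+1}=(1-\mu_t)(1-r_t)$, iteration gives $1-\mu_t=(1-\mu_0)\prod_{i=0}^{t-1}(1-r_i)$, the hypothesis $A_i\le r_i\le B_i$ sandwiches the product, and $\prod_{t}(1-A_t)=0$ then forces $\mu_t\to 1$. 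This makes self-contained a step the paper leaves as a citation, which is a genuine service given that the lemma is the engine of the paper's hitting-time bounds in Theorems~2 and~3. Two small points deserve explicit mention in a careful writeup, though neither undermines you. First, in the lower-bound direction the factorwise comparison $\prod_{i}(1-r_i)\ge\prod_{i}(1-B_i)$ requires $1-B_i\ge 0$: if, say, two factors equal $-0.5$, the right-hand product is $0.25$ and can exceed the left. The lemma as stated does not assert $B_t\le 1$, but since you proved $r_t\le 1$, you may without loss of generality replace $B_t$ by $\min(B_t,1)$, which still satisfies \eqref{eq:beta}; in \cite{yu2008new} the sequences are implicitly probability bounds in $[0,1]$, and indeed that is how the paper instantiates $B_t$ in Appendix~C. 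Second, $r_t$ is undefined when $\mu_t=1$ (division by $1-\mu_t$); this degeneracy is inherited from the statement of \eqref{eq:beta} itself, and absorption makes the claimed bounds trivial from that time onward, so it costs nothing, but it is worth a sentence.
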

\begin{lemma}[\cite{yu2008new}] \label{lemma3}
{Let $m$ and $n$ represent two discrete random variables that take on non-negative integer values with the finite expectations. We denote $F_m(\cdot)$ and $F_n(\cdot)$  their respective cumulative distribution functions
\begin{align}
& F_m(t)=\mathrm{Pr}(m \leq t)=\sum\nolimits_{i=0}^t \mathrm{Pr}(m=i), \\ 
&  F_n(t)= \mathrm{Pr}(n \leq t)=\sum\nolimits_{i=0}^t \mathrm{Pr}(n=i).
\end{align}
If $F_m(t) \geq F_n(t)(\forall t=0,1, \ldots)$, then the expectations  satisfy 
\begin{equation}
\mathbb{E}\{m \} \leq \mathbb{E} \{ n \},
\end{equation}
with $\mathbb{E}\{u\}=\sum_{t} t \mathrm{Pr}(u=t),  \mathbb{E}\{v \} =\sum_{t} t \mathrm{Pr}(v=t).$}
\end{lemma}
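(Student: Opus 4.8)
The plan is to reduce the comparison of the two expectations to a termwise comparison of tail probabilities, which is made possible by the fact that both $m$ and $n$ take values in the non-negative integers. The cornerstone I would establish first is the tail-sum (layer-cake) representation: for any non-negative integer-valued random variable $u$ with finite expectation, $\mathbb{E}\{u\} = \sum_{t=0}^{\infty} \mathrm{Pr}(u > t)$. To derive this, I would begin from the definition $\mathbb{E}\{u\} = \sum_{i=0}^{\infty} i\, \mathrm{Pr}(u=i)$, rewrite the integer weight as a sum of indicators $i = \sum_{t=0}^{\infty} \mathbf{1}[t < i]$, substitute, and then interchange the order of the two summations. Collapsing the inner sum over $i$ then yields $\mathrm{Pr}(u > t)$, giving the identity.

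Once this representation is in hand, the rest is immediate. The hypothesis $F_m(t) \geq F_n(t)$ for every $t$ is equivalent, upon subtracting from unity, to the tail inequality $\mathrm{Pr}(m > t) = 1 - F_m(t) \leq 1 - F_n(t) = \mathrm{Pr}(n > t)$, valid for all $t = 0, 1, \ldots$. Summing this inequality termwise over all $t \geq 0$ and applying the tail-sum identity to both $m$ and $n$ gives $\mathbb{E}\{m\} = \sum_{t \geq 0} \mathrm{Pr}(m > t) \leq \sum_{t \geq 0} \mathrm{Pr}(n > t) = \mathbb{E}\{n\}$, which is exactly the claimed conclusion $\mathbb{E}\{m\} \leq \mathbb{E}\{n\}$.

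The only step requiring genuine care, and the one I expect to be the main obstacle, is the justification for interchanging the double summation in the derivation of the layer-cake identity. Since every summand $\mathbf{1}[t < i]\,\mathrm{Pr}(u=i)$ is non-negative, this rearrangement is unconditionally valid by Tonelli's theorem, so no issue of conditional convergence arises. The finiteness of $\mathbb{E}\{m\}$ and $\mathbb{E}\{n\}$ assumed in the statement is then precisely what guarantees that the tail sums converge to finite limits, so that the termwise inequality transfers cleanly to the limiting sums. Beyond this, the argument is entirely elementary and requires no further machinery.
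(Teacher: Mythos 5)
Your proof is correct. Note that the paper itself offers no proof of this lemma: it is imported verbatim from the cited reference \cite{yu2008new}, so there is no in-paper argument to compare against. Your route is the standard one for exactly this statement: establish the tail-sum identity $\mathbb{E}\{u\}=\sum_{t\geq 0}\mathrm{Pr}(u>t)$ for non-negative integer-valued $u$ via the indicator decomposition $i=\sum_{t\geq 0}\mathbf{1}[t<i]$ with the interchange licensed by Tonelli, then observe that the hypothesis $F_m(t)\geq F_n(t)$ is precisely the termwise tail inequality $\mathrm{Pr}(m>t)\leq\mathrm{Pr}(n>t)$, and sum. All the details you flag are handled correctly: the non-negativity of the summands makes the rearrangement unconditional, and the assumed finiteness of the expectations (which, strictly speaking, is only needed to make the conclusion informative rather than to make it true, since $\mathbb{E}\{n\}=\infty$ would render the inequality vacuous) ensures both tail series converge. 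Your self-contained argument is arguably preferable to the paper's bare citation, since it makes the appendix's convergence analysis of Theorem~2 verifiable without consulting the external reference.
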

\subsection{Proof of Theorem~\ref{Theorem:EFHT}} \label{Appendix:EFHT}
{By utilizing Lemma~\ref{lemma2} with \eqref{eq:beta}, one obtains
\begin{equation}
    1-\mu_t \leq \left(1-\mu_0\right) \prod\nolimits_{i=0}^{t-1}\left(1-A_i\right).
\end{equation}
Note that  $\mu_t$ expresses the distribution of $\tau$, i.e., $\mu_t=F_\tau(t)$, we can get the lower bound of $F_\tau(t)$ as
\begin{equation}
    F_\tau(t) \geqslant \begin{cases}
    \mu_0, & t=0, \\ 1-\left(1-\mu_0\right) \prod_{i=0}^{t-1}\left(1-A_i\right), & t \geq 1. 
    \end{cases}
\end{equation}
Let's denote a virtual random variable $\eta$ with its cumulative distribution function equal the lower bound of $F_\tau(t)$. After that, the expectation of $\eta$ is
\begin{equation}
    \begin{aligned}
&\mathbb{E}\{\eta\}  =0 \mu_0+1 \left(1-\left(1-A_0\right)\left(1-\mu_0\right)-\mu_0\right) + \\
&\sum\nolimits_{t=2}^{+\infty} t \left(\left(1-\mu_0\right) \prod\nolimits_{i=0}^{t-2}\left(1-A_i\right)-\left(1-\mu_0\right) \prod\nolimits_{i=0}^{t-1}\left(1-A_i\right)\right) \\
& =\left(A_0+\sum\nolimits_{t=2}^{+\infty} t A_{t-1} \prod\nolimits_{i=0}^{t-2}\left(1-A_i\right)\right)\left(1-\mu_0\right).
\end{aligned}
\end{equation}
Since $D_\tau(t) \geq D_\eta(t)$, according to Lemma~\ref{lemma3}, $\mathbb{E}\{ \tau \} \leq \mathbb{E}\{\eta\}$. Thus, the upper bound of the expected first hitting time is
\begin{equation}
  \mathbb{E} \{ \tau \} \leq\left(A_0+\sum\nolimits_{t=2}^{+\infty} t A_{t-1} \prod\nolimits_{i=0}^{t-2}\left(1-A_i\right)\right)\left(1-\mu_0\right).
\end{equation}
We have assumed that  Algorithm~\ref{Alg:Integer-coded} begins with a non-optimal solution, which implies $\mu_0 =0$. In a similar manner, the lower bound of the expected first hitting time can be derived as
\begin{equation}\label{eq:lowerbound}
  \mathbb{E}\{\tau\} \geq \left(B_0+\sum\nolimits_{t=2}^{+\infty} t B_{t-1} \prod\nolimits_{i=0}^{t-2}\left(1-B_i\right)\right)\left(1-\mu_0\right).
\end{equation}
By mutation, the max probability of a solution being mutated to be the optimal solution is $p_m(1-p_m)^{2K-1}$. Consequently, the maximum probability of a mutated population to become an optimal population is $1-(1-p_m(1-p_m)^{2K-1})^{2K}$. Due to the fact that
\begin{equation}
\mathrm{Pr}(\zeta_{n+1} \in \mathcal{X}^{\ast} \mid \zeta_t =  x  ) \leq 1-(1-p_m(1-p_m)^{2K-1})^{2K},
\end{equation}
we obtain the following upper bound
     \begin{equation}\label{eq:upperbound}
      \begin{split}
         &\sum\nolimits_{ x \notin \mathcal{X}^\ast}  \mathrm{Pr}(\zeta_{n+1}\in X^*\mid \zeta_t= x) \frac{\mathrm{Pr}(\zeta_t= x)}{1-\mu_t} \\
         &\leq \sum\nolimits_{ x \notin \mathcal{X}^\ast} (1-(1-p_m(1-p_m)^{2K-1})^{2K})\frac{\mathrm{Pr}(\zeta_t= x )}{1-\mu_t}\\
         &=(1-(1-p_m(1-p_m)^{2K-1})^{2K})\frac{\sum_{ x \notin \mathcal{X}^{\ast}} \mathrm{Pr}(\zeta_t= x )}{1-\mu_t}\\
         &\stackrel{(a)}{\approx}1-(1-p_m(1-p_m)^{2K-1})^{2K},
   \end{split}      
    \end{equation}
where $(a)$ is obtained by the definition $\mu_t = \sum_{ x \in \mathcal{X}^\ast} \mathrm{Pr}(\zeta_t=x)$ as shown in Definition~\ref{Def:1}. Hence, $1- \mu_t = \sum_{ x \notin \mathcal{X}^\ast} \mathrm{Pr}(\zeta_t=x)$. Let $ B_t=2Kp_m(1-p_m)^{2K-1}$ and by utilizing \eqref{eq:lowerbound}, we obtain
\begin{align}
     &\mathbb{E}\{\tau\} \geq B_o+\sum\nolimits_{t=2}^{\infty} tB_{t-1} \prod\nolimits_{i=0}^{t-2}(1-B_i)=\frac{1}{n}\frac{1}{p_m}\left( \frac{1}{1-p_m}\right)^{2K-1}\nonumber\\
     &=\frac{(1-p_m)^{-2K}}{2Kp_m}.
\end{align}
In fact, $p_m$ is predetermined and therefore, $\mathbb{E}\{\tau\}$ is obtained as in the theorem.}
\vspace{-5mm}
\subsection{Proof of Theorem~\ref{Theorem:EFHT_HGA}}\label{Appendix:Proof_EFHTofHGA}
\textcolor{black}{We extend Theorem~\eqref{Theorem:EFHT} by modeling HGA as a Markov chain over the hybrid state space. Each solution is a vector $\pmb{\gamma} = (\pmb{\phi}, \pmb{\xi})$, where $i)$ $\pmb{\phi} \in \{0,1\}^{2K}$ (binary associations), $ii)$ $\pmb{\xi} \in \{0, \frac{1}{L}, \frac{2}{L}, \dots, 1\}^K$ (discretized power levels). The entire state space has size $2^{2K} \times (L+1)^K$. Let analysis the mutation probability: 1. Probability of binary-mutating to the optimal $\pmb{\phi}^*$ is $\leq 1- (1-p_m)^{2K}$ (as in Theorem~\ref{Theorem:EFHT}). 2. Probability of Real-mutating to the optimal $\pmb{\xi}^*$ is $\leq 1- (1-\eta_m)^K$, where $\eta_m$ is the minimal probability to jump to a specific discretized level. Real-valued power variables $\xi_k$ are discretized into $L$ levels (e.g., fixed-point precision). Then, the minimal mutation probability to reach the optimal discretized power level is $\eta_m = \Theta(1/L)$.
In worst-case transition probability, the probability to generate the optimal solution from any suboptimal state is bounded below by $\Pr(p_{\text{hybrid}}) \leq 1- \underbrace{(1-p_m)^{2K}}_{\text{binary}} \times \underbrace{(1-\eta_m)^K}_{\text{real}}.$ Using the same Markov chain absorption and framework as \eqref{eq:upperbound} and \eqref{eq:lowerbound}, the EFHT satisfies 
\begin{equation}
\mathbb{E}\{\tau\} \geq\Omega\left( \frac{1}{(1-p_m)^{2K} (1-\eta_m)^K |\mathcal{S}|} \right).
\end{equation}
Accounting for $|\mathcal{S}| = \Theta(K)$ (as in Theorem~\ref{Theorem:EFHT}), we obtain the bound $\mathbb{E}\{\tau\} \geq \tilde{c}  (1-p_m)^{-2K} (1-\eta_m)^{-K} K^{-1}$.}
\vspace{-3mm}

\bibliographystyle{IEEEtran}
\bibliography{refs}

\end{document}